\theoremstyle{plain}
\newtheorem{thm}{\protect\theoremname}
\theoremstyle{plain}
\theoremstyle{plain}
\theoremstyle{remark}
\theoremstyle{plain}
\theoremstyle{plain}
\theoremstyle{remark}
\theoremstyle{remark}
\newtheorem{rem}[thm]{\protect\remarkname}
\theoremstyle{definition}
\newtheorem{defn}[thm]{\protect\definitionname}
\theoremstyle{definition}
\theoremstyle{plain}
\theoremstyle{plain}
\numberwithin{thm}{section}
\newenvironment{nouppercase}{%
	\renewcommand{\uppercasenonmath}[1]{}}{}
\providecommand{\claimname}{\inputencoding{latin9}Claim}
\providecommand{\conjecturename}{\inputencoding{latin9}Conjecture}
\providecommand{\corollaryname}{\inputencoding{latin9}Corollary}
\providecommand{\definitionname}{\inputencoding{latin9}Definition}
\providecommand{\examplename}{\inputencoding{latin9}Example}
\providecommand{\lemmaname}{\inputencoding{latin9}Lemma}
\providecommand{\notename}{\inputencoding{latin9}Note}
\providecommand{\propositionname}{\inputencoding{latin9}Proposition}
\providecommand{\questionname}{\inputencoding{latin9}Question}
\providecommand{\remarkname}{\inputencoding{latin9}Remark}
\providecommand{\theoremname}{\inputencoding{latin9}Theorem}
\providecommand{\problemname}{\inputencoding{latin9}Problem}
\newcommand\twoheaduparrow{\mathrel{\rotatebox{90}{$\twoheaduparrow$}}}
\newcommand\twoheaddownarrow{\mathrel{\rotatebox{270}{$\twoheaddownarrow$}}}
\begin{document}
	\author{N\MakeLowercase{ick} E\MakeLowercase{arly}}
	\thanks{Max Planck Institute for Mathematics in the Sciences.  Email: \href{nick.early@mis.mpg.de}{nick.early@mis.mpg.de}}
	\title[Moduli Space Tilings and Lie-Theoretic Color Factors]{Moduli Space Tilings and Lie-Theoretic Color Factors}
	\begin{nouppercase}
		\maketitle
	\end{nouppercase}
	\begin{abstract}
		A detailed understanding of the moduli spaces $X(k,n)$ of $n$ points in projective $k-1$ space is essential to the investigation of generalized biadjoint scalar amplitudes, as discovered by Cachazo, Early, Guevara and Mizera (CEGM) in 2019. But in math, conventional wisdom says that it is completely hopeless due to the arbitrarily high complexity of realization spaces of oriented matroids.  In this paper, we nonetheless find a path forward.  
		
		We present a Lie-theoretic realization of color factors for color-dressed generalized biadjoint scalar amplitudes, formulated in terms of certain tilings of the real moduli space $X(k,n)$ and collections of logarithmic differential forms, resolving an important open question from recent work by Cachazo, Early and Zhang.  The main idea is to replace the realization space decomposition of $X(k,n)$ with a large class of overlapping tilings whose topologies are individually relatively simple.  So we obtain a collection of color-dressed amplitudes, each of which satisfies $U(1)$ decoupling separately.  The essential complexity reemerges when they are all superposed.
		
	\end{abstract}

	\begingroup
	\let\cleardoublepage\relax
	\let\clearpage\relax
	\tableofcontents
	\endgroup

\section{Introduction}

	In 2019, in \cite{CEGM2019} Cachazo, Early, Guevara and Mizera (CEGM) discovered a remarkable family of generalizations of the biadjoint scalar partial amplitudes which have an essentially recursive character with respect to soft and hard limits.  The defining formula is an integral over the moduli space $X(k,n)$ of $n$ points in projective space $\mathbb{P}^{k-1}$, localized to the solutions of the scattering equations.  A second formula, the Global Schwinger Parameterization, was introduced in \cite{CE2020B} which uses the parameterization of the positive configuration space to assemble the set of (tree-level) Schwinger parameterized Feynman diagrams into a single integral over $\mathbb{R}^{(k-1)(n-k-1)}$, see Figure \ref{fig:projectiontropgrass25}.  Recently the Global Schwinger Parameterization has been substantially generalized, see \cite{AFSPT2023}.
	
	Since \cite{CEGM2019}, much work has focused a special case, which is modeled on positive configuration space, or the torus quotient of the positive Grassmannian.  When $k=3$, the positive configuration space consists of projective equivalence classes of generic convex point configurations in $\mathbb{P}^2$.

Recently, Cachazo, Early and Zhang \cite{CEZ2022,CEZ2023} introduced an extension of the CEGM integral on $X(3,n)$ associated to non-convex point configurations in real projective space, developing a combinatorial framework for decoupling-type relations for color-dressed generalized biadjoint scalar amplitudes.  A Feynman diagram expansion was given by way of the Laplace transform of a new family of polyhedral fans, chirotopal tropical Grassmannians \cite{CEZ2022}, which are subsets of the tropical Grassmannian defined using ideas from oriented matroids \cite{Orientedmatroids}.

However, a Lie-theoretic realization of the color factors\footnote{See \cite{Dixon} and \cite{Mangano:1990by} for comprehensive reviews.} was missing: the generalized color factors of \cite{CEZ2023}, defined purely combinatorially, should involve certain traces of products of basis elements in adjoint representation of the Lie algebra $\mathfrak{su}(N)$.  This paper fills in that gap.  In our construction, there are analogs of important identities such as the vanishing of the color-dressed amplitude when the generator assigned to one of its particles is reassigned to a multiple of the identity matrix; this phenomenon is called $U(1)$-decoupling.  In our story such decoupling identities hold very naturally, as happens for $k=2$ biadjoint scalar amplitudes, where each color factor is the trace of a product of Lie algebra generators of $SU(N)$.  However, our color factors are rather different: they are products of certain collections of $k-1$ standard color factors.  

Let us introduce the main players briefly here, returning in later sections for discussion and motivation in more detail.

Fix a basis $(T^1,\ldots, T^{N^2-1})$ of the adjoint representation of $\mathfrak{su}(N)$ and let $T^{N^2}$ be proportional to the identity in $U(N)$, chosen such that $\text{tr}(T^iT^j) = \delta^{ij}$, following  common conventions.

	For any cyclic order $\sigma = (\sigma_1\cdots \sigma_n)$ on $\{1,\ldots, n\}$ (also called a \textit{color order}) and a tuple $(a_1,\ldots, a_n)$ of elements in $\{1,\ldots, N^2-1\}$, possibly with repetitions, denote by $c_\sigma$ the $(2,n)$ \textit{color factor} associated to $\sigma$, that is, the product of the trace of $n$ given generators in the cyclic order $\sigma$:
	\begin{eqnarray*}
		c_\sigma & = & \text{tr}(T^{a_{\sigma_1}}\cdots T^{a_{\sigma_n}}).
	\end{eqnarray*}
	This number is closely related to the structure coefficients of $\mathfrak{su}(N)$; see \cite{Dixon} for a general overview.  Assuming that $N$ is sufficiently large, by evaluating the color-dressed amplitude for different choices of the elements $T^{a_1},\ldots, T^{a_n}$ one can extract from the full color-dressed amplitude each partial amplitude, one for each color order.  For example, in the case of the biadjoint scalar, for any two tuples of elements $(T^{a_1},\ldots, T^{a_n})$ and $(T^{\tilde{a}_1},\ldots, T^{\tilde{a}_n})$ of $\mathfrak{su}(N)$, and  the color-dressed amplitude can be written as 
\begin{equation}\label{coBS}
	{\mathcal M}_n(\{ k_i,a_i,{\tilde a}_i\} ) =\!\!\! \sum_{\alpha,\beta \in S_{n}/{\mathbb{Z}_n}} \!\!\! {\text {tr}}\left( T^{a_{\alpha (1)}}T^{a_{\alpha (2)}} \cdots T^{a_{\alpha (n)}}\right)\!\! {\text{ tr}}\left( T^{{\tilde a}_{\beta (1)}}T^{{\tilde a}_{\beta (2)}} \cdots T^{{\tilde a}_{\beta (n)}}\right)m_n(\alpha,\beta). 
\end{equation}
	where the partial biadjoint scalar amplitudes $m_n(\alpha,\beta)$ are characterized by a pair of cyclic orders $\alpha,\beta$.  Here $k_1,\ldots, k_n$ are momentum vectors.

	Now suppose that $2\le k\le n-2$.  Fix a tuple $L = (\ell_1,\ldots, \ell_{k-1})$ of distinct elements of $\{1,\ldots, n\}$.  A type $(k,n)$ \textit{color order} is a $(k-1)$-tuple ${\overrightarrow{\sigma}} = (\sigma^1,\ldots,\sigma^{k-1})$ of cyclic orders $\sigma^1,\ldots,\sigma^{k-1}$ on respectively $\{1,\ldots, n\}\setminus (L \setminus \ell_j)$ for $j=1,\ldots, k-1$.  Denote by $\text{CO}^{(k)}_n$ the set of all type $(k,n)$ color orders.  Given a color order ${\overrightarrow{\sigma}} = (\sigma^1,\ldots, \sigma^{k-1}) \in \text{CO}^{(k)}_n$, the \textit{color factor} $c_{\overrightarrow{\sigma}}$ is the product of the $k-1$ color factors $c_{\sigma^j}$, namely
	\begin{eqnarray}\label{eq: color factor intro}
		c_{\overrightarrow{\sigma}} & = & \prod_{j=1}^{k-1}c_{\sigma^j}.
	\end{eqnarray}
		The permutation invariant color-dressed generalized biadjoint scalar amplitude\footnote{We urge caution here: Equation \eqref{eq: color factor intro} is different from the color-dressed generalized biadjoint amplitude introduced in \cite{CEZ2022}, wherein the color factors were formal variables!  We here adapt the definition to our realization of the color factors.  It will be an important problem to investigate how to pass from that definition to ours.} is given by 
	\begin{eqnarray}\label{eq: color dressed integrand generator perm invariant intro}
		\mathcal{M}^{(k)}_\mathbf{c} & = & \sum_{\overrightarrow{\alpha},\overrightarrow{\beta} \in \text{CO}^{(k)}_n}c_{\overrightarrow{\alpha}} \tilde{c}_{\overrightarrow{\beta}} m^{(k)}\left(\overrightarrow{\alpha},\overrightarrow{\beta}\right)
	\end{eqnarray}
	where the sum is over the set $\text{CO}^{(k)}_n$ of all pairs of type $(k,n)$-color orders $\overrightarrow{\sigma} = (\sigma^1,\ldots,\sigma^{k-1})$.
See Section \ref{sec: color-orders Xkn} for details, discussion and examples.
\begin{rem}
	Here each color factor $c_{\sigma^j}$ in Equation \eqref{eq: color factor intro} is the trace of a single tuple of generators of $\mathfrak{su}(N)$; however in Section \ref{sec: permutation invariance}, we tweak this condition and take each $c_{\sigma^j}$ to be the trace of a collection of its own set of elements in the cyclic order prescribed by $\sigma^j$.  In other words, in Equation \eqref{eq: color factor intro}, the color algebra is still the $\mathfrak{su}(N)$ color algebra, whereas in Section \ref{sec: permutation invariance} we generalize it to $\mathfrak{su}(N)^{\otimes (k-1)} = \mathfrak{su}(N)\otimes\cdots \otimes \mathfrak{su}(N)$.  The latter has many partial decouplings and in particular $n$ full $U(1)$ decouplings, so we may regard the color factors in Equation \eqref{eq: color factor intro} to be (an important) special case of it.
\end{rem}

	\vspace{.1in}

	Let $X(k,n) = G^\circ(k,n)\slash \left(\mathbb{R}_{\not=0}\right)^n$ be the torus quotient of the open real Grassmannian $G^\circ(k,n)$ where all maximal minors are nonzero; then when $k=2$ this is the moduli space $M_{0,n} = X(2,n)$.  For each $(k-1)$ tuple $L = (\ell_1,\ldots, \ell_{k-1})$ of distinct elements in  $\{1,\ldots, n\}$ we define an embedding $\varphi_L$ of $X(k,n)$ into the $(k-1)$-fold Cartesian product of $k-1$ copies of $X(2,n-(k-2))$,
	$$\varphi_L:X(k,n) \rightarrow X(2,n-(k-2)) \times \cdots \times X(2,n-(k-2))$$
	as follows.  Denote 
	$$\mathcal{H}_{j_1,\ldots, j_{d}} = H_{j_1}\cap \cdots \cap H_{j_{d}}$$
	for any subset $\{j_1,\ldots, j_{d}\}$ of $\{1,\ldots, n\}$.  We shall construct from each generic point in $X(k,n)$ a color order $\overrightarrow{\sigma} = (\sigma^1,\ldots, \sigma^{k-1})$.  For each $j=1,\ldots, k-1$, the intersection of $\mathcal{H}_{L\setminus \{\ell_j\}}$ with the remaining hyperplanes in turn lie on an embedded projective line in some (dihedral) order.  Choose a cyclic order for each, say $\overrightarrow{\sigma} = (\sigma^1,\ldots, \sigma^{k-1})$, where 
	\begin{eqnarray}
		\sigma^i & = & (\sigma^i_1,\ldots, \sigma^i_{n-(k-2)}).
	\end{eqnarray}
		\begin{figure}[h!]
	\centering
	\includegraphics[width=.3\linewidth]{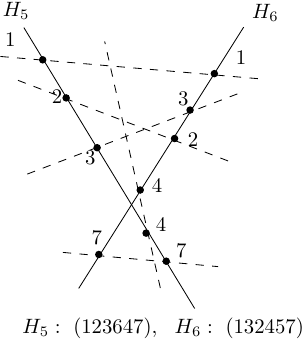}
	\caption{The (generically defined) projection $\varphi_{(5,6)}$ of $X(3,7)$ onto $X(2,6)\times X(2,6)$ determines an equivalence class of four distinct color orders $\overrightarrow{\sigma} = ((123647),(132457)),((746321),(132457)),\ldots$, since for the color order we distinguish a cyclic order from its reverse, whereas in $X(2,n)$ there is no such distinction.  The intersection $H_5\cap H_6$ is left unlabeled, though it is necessary to define the projection $\varphi_{(5,6)}$. We conversely obtain what we call informally a \textit{convolution} of two curvy $3$-dimensional associahedra: the generic arrangements of points on the lines $H_5,H_6$ determines uniquely the remaining five lines $H_1,H_2,H_3,H_4,H_7$, hence an open subset of $X(3,7)$.  The compactification of the image is interesting and we discuss in some detail later.}
	\label{fig:convolvedcolororders7}
\end{figure}
	See Figure \ref{fig:convolvedcolororders7} in the case that $L = (5,6)$.

	Explicit coordinates may be introduced:
	\begin{eqnarray}\label{eq:embedding}
		g & \mapsto & \left(\left(w^{(\widehat{\ell_1})}_{i_1,j_1}(g)\right),\ldots,\left( w^{(\widehat{\ell_{k-1}})}_{i_{k-1},j_{k-1}}(g)\right)\right)
	\end{eqnarray}
	where $\widehat{\ell_t} = L \setminus \{\ell_t\}$ and $\{i_t,j_t\}$ ranges over all cyclically nonconsecutive pairs in $\{1,\ldots, n\} \setminus (L\setminus \{\ell_t\})$.  Here the $w^{(\widehat{\ell_t})}_{i_t,j_t}(g)$ are certain cross-ratios, pull-backs of the so-called u-variables, cross-ratios $u_{i,j}$ on $M_{0,n-(k-2)}$.  To be concrete, whenever $i,j$ are not cyclically adjacent, the \textit{u-variable} $u_{i,j}$ is given by $u_{i,j} = \frac{p_{i,j+1}p_{i+1,j}}{p_{i,j}p_{i+1,j+1}}$ with $p_{i,j}$ the maximal $2\times 2$ minors on the Grassmannian $G(2,n-(k-2))$, which are pulled back to $X(k,n)$ as the cross-ratios of the form $w^{(M)}_{ij} = \frac{p_{M,i,j'}p_{M,i',j}}{p_{M,i,j}p_{M,i',j'}}$ where $i',j'$ are the cyclic successors of $i,j$ in the set $\{1,\ldots, n\} \setminus M$.  For the reader's convenience, code providing an implementation of the cross-ratios $w^{(M)}_{i,j}$ is given in Appendix \ref{sec:planar cross-ratios}.  More generally, in order to construct CEGM integrands for arbitrary color orders we will need to define u-variables for cyclic orders other than the standard $(1,2,\ldots, n)$, but for sake of simplicity we will not belabor this point and leave it up to the reader to permute labels as needed.
	
	By the preceding discussion, it follows that the image of each such embedding intersects generically each of the $\left(\frac{(n-(k-2)-1)!}{2}\right)^{k-1}$ Cartesian products of curvy associahedra, which, after pulling back, induce a tiling of $X(k,n)$.  Each tile in such a tiling is characterized by explicit inequalities on cross-ratios of the form $w^{(M)}_{ij}$.  Each tile is what we informally call a convolution product of $k-1$ curvy associahedra\footnote{That is, the connected components of the real moduli space $M_{0,n-(k-2)}$.}, as the (compactification of the) preimage is \textit{not} a Cartesian product as we will see shortly for $X(3,5)$.  The case when all curvy associahedra are defined by the same cyclic order, obtained by deleting from $(1,2,\ldots, n)$ the elements in $L\setminus \ell$ for each $\ell\in L$, has a direct connection to zonotopal tilings.  

	To illustrate the main construction, we turn to the embedding $\varphi_{(1,2)}:X(3,5) \rightarrow X(2,4) \times X(2,4)$ represented in coordinates by 
	\begin{eqnarray}\label{eq: example 35}
		g \mapsto \left((w^{(1)}_{24},w^{(1)}_{35}), (w^{(2)}_{14},w^{(2)}_{35})\right),
	\end{eqnarray}	
	where
	$$w^{(1)}_{24} = \frac{p_{125}p_{134}}{p_{124}p_{135}},\ w^{(1)}_{35} = \frac{p_{123}p_{145}}{p_{135}p_{124}},\ w^{(2)}_{14} = \frac{p_{125}p_{234}}{p_{124}p_{235}},\ w^{(2)}_{35} = \frac{p_{123}p_{245}}{p_{235}p_{124}}$$
	are regarded as systems of cross-ratios\footnote{Incidentally, the right hand side of Equation \eqref{eq: example 35} can be written in terms of planar cross-ratios $w_{abc}$, defined in \cite{Early2019PlanarBasis}, see also Appendix \ref{sec:planar cross-ratios}, as $\left((w_{124},w_{135}w_{235}),(w_{124}w_{134},w_{235})\right).$} on $M_{0,4}$, since 
	$$w^{(1)}_{24} + w^{(1)}_{35} = w^{(2)}_{14} + w^{(2)}_{35} = 1.$$

	In our construction, we further associate to the color order ${\overrightarrow{\sigma}} = ((2345),(1345))$ the logarithmic differential form 
	\begin{eqnarray*}
		\Omega_{\overrightarrow{\sigma}}(X(3,5)) & = & d\log\left(\frac{w^{(1)}_{24}}{w^{(1)}_{35}}\right)\wedge d\log\left(\frac{w^{(2)}_{14}}{w^{(2)}_{35}}\right)\\
		& = & d\log\left(\frac{p_{125}p_{134}}{p_{123}p_{145}}\right)\wedge d\log\left(\frac{p_{125}p_{234}}{p_{123}p_{245}}\right).
	\end{eqnarray*}
	
	Note that the four color orders
	$$((2345,1345)),((5432,1345)),((2345,5431)),((5432),(5431))$$
	all correspond to the same differential form and integrand (up to an overall sign).  Namely, one can check using the formula in Appendix \ref{sec: CEGM integrand calculation} that $\Omega_{\overrightarrow{\sigma}}(X(3,5))$ is proportional to
	$$\frac{p_{124}}{p_{123} p_{125} p_{134} p_{145} p_{234} p_{245}}.$$
	
	Moreover, by inserting the square of this new expression into the CEGM formula, instead of the usual 3-Parke-Taylor factor
	$$PT^{(3)}_5 = \frac{1}{p_{123}p_{234}p_{345}p_{451}p_{512}},$$
	we compute
	\begin{eqnarray}\label{eq: 35 special}
		m^{(3)}_5({\overrightarrow{\sigma}},{\overrightarrow{\sigma}}) & = &\frac{1}{\mathfrak{s}_{123} \mathfrak{s}_{145}}+\frac{1}{\mathfrak{s}_{123} \mathfrak{s}_{245}}+\frac{1}{\mathfrak{s}_{134} \mathfrak{s}_{245}}+\frac{1}{\mathfrak{s}_{125} \mathfrak{s}_{134}}+\frac{1}{\mathfrak{s}_{125} \mathfrak{s}_{234}}+\frac{1}{\mathfrak{s}_{145} \mathfrak{s}_{234}}.
	\end{eqnarray}
	The residues of the partial amplitude in Equation \eqref{eq: 35 special} clearly reflect the structure of a hexagon\footnote{Actually, this is equivalent to the union of two adjacent curvy pentagons in $M_{0,5}$.}, instead of the usual pentagon which is reflected in the residues of the biadjoint scalar partial amplitude
	\begin{eqnarray}
		m^{(2)}_5(\mathbb{I}_5,\mathbb{I}_5) = \frac{1}{s_{12}s_{34}} + \frac{1}{s_{23}s_{45}} + \frac{1}{s_{34}s_{15}} + \frac{1}{s_{45}s_{12}} + \frac{1}{s_{15}s_{23}},
	\end{eqnarray}
	where we following the standard CHY convention for the biadjoint scalar in setting $\mathbb{I}_n = (123\cdots n)$.

	In Equation \eqref{eq: 35 special}, the kinematic parameters $\mathfrak{s}_{abc}$ are new objects, indexed by $3$-element subsets.  In general they satisfy the $n$ independent momentum conservation relations
$$\sum_{\{a,b,c\} \ni j}\mathfrak{s}_{abc}=0$$
for each $j=1,\ldots, n$.  We refer to \cite{CEGM2019} for details on their construction, including a $k\ge 3$ generalization of the spinor helicity formalism.

	Incidentally, we point out that the differential form $\Omega_{((2345,1345))}$ can be expressed in terms of the so-called planar cross-ratios as 
	$$\Omega_{((2345),(1345))}(X(3,5)) = d\log\left(\frac{w_{124}}{w_{135}w_{235}}\right)\wedge d\log\left(\frac{w_{124}w_{134}}{w_{235}}\right).$$
	which makes manifest the identification the four planar poles out of the six poles of the amplitude computed in Equation \eqref{eq: 35 special}, namely
	$$w_{124} \Leftrightarrow \mathfrak{s}_{125},\ \ w_{134} \Leftrightarrow \mathfrak{s}_{234},\ \ w_{135} \Leftrightarrow \mathfrak{s}_{145},\ \ w_{235} \Leftrightarrow \mathfrak{s}_{123}.$$
	Here we follow the indexing convention for nonfrozen $k$-element subsets from \cite{Early2019PlanarBasis,Early19WeakSeparationMatroidSubdivision}.  Moreover, we see that the two new poles in Equation \eqref{eq: 35 special} which are not in the planar basis of kinematic invariants \cite{Early2019PlanarBasis} are $\mathfrak{s}_{134}$ and $\mathfrak{s}_{245}$, since the two sets $\{1,3,4\}$ and $\{2,4,5\}$ are not intervals with respect to the cyclic order $(1,2,3,4,5)$.
	
	This starts to reveal the novelty of the topology here: based on the amplitude and Newton polytope calculations, we are seeing that the fundamental building blocks include curvy hexagons, not just associahedra.
	
	Let us point out something intriguing: based on Newton polytope calculations in Section \ref{sec: examples}, the analogous construction for any $(n-2,n)$, taking the convolution product of $k-1$ one-dimensional associahedra for a fixed global cyclic order, appears to have the structure of a $(k-1)$-dimensional permutohedron!

	\subsection{Comments on Tree-Level Scattering Amplitudes via Tropical Geometry}
	Feynman's construction of tree-level scattering amplitudes in perturbative Quantum Field Theory is, at heart, both combinatorial and tropical in nature.  The standard procedure starts from the interaction term in a Lagrangian, which encodes a set of propagators, additional particle-specific data, and instructions on how to build Feynman diagrams and ultimately scattering amplitudes. The amplitude is then formulated by replacing each Feynman diagram with a rational function in the kinematic invariants, possibly multiplied with additional data (such as color factors), and then summing.  Feynman diagrams are (at tree-level) acyclic weighted directed graphs with $n$ leaves, where each edge is decorated with a momentum vector in Minkowski space.  Momentum conservation holds at each vertex of the graph.  The interaction term in a Lagrangian determines the Feynman \textit{rules}.  These rules entail a collection of inverse \textit{propagators}\footnote{Or, channels, especially for 4-point amplitudes, as in the $s,t,u$ channels, which are pairwise crossing.} $P_J^2 = (p_{j_1} + \cdots + p_{j_t})^2$, such that the amplitude which is being constructed has a simple pole (and only where) some collection of inverse propagators vanish: $P_J^2=0$.  They also entail how the propagators can interact with each other in the amplitude: there is a rule specifying which collections of poles are \textit{compatible}, or have non-crossing channels.  Two channels $P_{J_1}^2$ and $P_{J_2}^2$ are \textit{crossing} if and only if all four intersections are nonempty\footnote{In combinatorics, this is known to translate to the condition that the common refinement of two matroid subdivisions of the second hypersimplex is again a matroid subdivision \cite{Early19WeakSeparationMatroidSubdivision}, see also \cite{Early2019PlanarBasis}.}: $J_1\cap J_2,\ J_1\cap J_2^c,\ J_1^c \cap J_2,\ J_1^c \cap J_2^c$.
	
	For each maximal collection of compatible propagators, in particular each Feynman diagram, the Feynman rules associate a number, a certain \textit{numerator}, that encodes properties of the particles such as flavor and color.  In our context we consider biadjoint scalar amplitudes, in which case these numerators involve the structure constants of special unitary Lie algebras $\mathfrak{su}(N)$.  For large enough $N$, one can extract their coefficients, which are homogeneous rational functions of the momenta.   These rational functions are called \textit{partial amplitudes}; the color order imposes a certain notion of \textit{planarity}, in the sense that their Feynman diagrams can be embedded as planar graphs in a disk with $n$ marked points on the boundary in a fixed cyclic (color) order.  So, the possible propagators in a biadjoint cubic scalar partial amplitude are of the form $(P_{\sigma_{i+1}} + P_{\sigma_{i+1}} + \cdots + P_{\sigma_j})^2$ for a given cyclic order $(\sigma_1,\ldots, \sigma_n)$.  In physics, this cyclic order\footnote{However, the partial amplitude is actually invariant under the full dihedral group.} is sometimes called a color order.  

	\section{Review: Global Schwinger Parameterization, Generalized Color Orders and Integrands}
	
Generalized biadjoint scalar amplitudes, introduced by Cachazo, Early, Guevara and Mizera (CEGM) \cite{CEGM2019}, are homogeneous, degree $-(k-1)(n-k-1)$ rational functions on the kinematic space $\mathcal{K}_{k,n}$ that are intimately related to the tropical Grassmannian \cite{SpeyerStumfelsTropGrass}, and can be viewed as interpolations between Feynman diagram constructions of collections of standard QFT amplitudes \cite{Early2019PlanarBasis,Early19WeakSeparationMatroidSubdivision}.  In the CHY formalism, when $k=2$, modifying the Parke-Taylor factor integrand is known to provide complete information about amplitudes at tree level for many Quantum Field Theories, including Yang-Mills, NLSM, and special Galileon \cite{CachazoNLSM}.  Recently works \cite{CU2022} and \cite{AFSPT2023} are developing these theories using the Global Schwinger Parameterization \cite{CE2020B}.

CEGM amplitudes can be defined in several ways; we present the most general.  Following the discovery by CHY of the scattering equations formalism (corresponding to CEGM amplitudes in the case $k=2$), in \cite{CEZ2023}, Cachazo, Early and Zhang gave the following definition: for two differential forms on the subset of the real moduli space $X(k,n)$ of $n$ generically positioned points in projective space $\mathbb{P}^{k-1}$, as determined by the set of simplicial cells\footnote{And certain compatibility conditions on integrands which differ by a single simplex flip, see \cite{CEZ2023}.} in two arrangements of hyperplanes corresponding to a pair of realizable oriented uniform matroids $\chi_{L_1},\chi_{L_2}$, the CEGM amplitude is given by
\begin{eqnarray}\label{eq: mkn defn scattering}
	m^{(k)}_n(\mathcal{I}_L,\mathcal{I}_R) & = & \int_{X(k,n)}d\mu_{k,n} \mathcal{I}_L\mathcal{I}_R = \sum_{c \in \text{Crit}(\mathcal{S})}\frac{1}{\det'(\Phi)} \mathcal{I}_L\mathcal{I}_R\bigg\vert_c,
\end{eqnarray}
where $\det'(\Phi)$ denotes the reduced Hessian determinant \cite{CEGM2019} of the scattering potential $\mathcal{S}$. The measure $d\mu_{k,n}$ in Equation \eqref{eq: mkn defn scattering} is defined so as to localize the integral to the solutions of the scattering equation.

Here $\mathcal{S}$ is the critical points of the \textit{scattering potential} 
$$\mathcal{S} = \sum_{1\le j_1<\cdots <j_k\le n} \log(p_J) \mathfrak{s}_J,$$
where $p_J$ is a Plucker coordinate on the Grassmannian; thanks to momentum conservation we have that $\sum_{J: J \ni j}\mathfrak{s}_J = 0$ and the function $\mathcal{S}$ is well-defined on $X(k,n)$.  The denominator of each CEGM integrand \cite{CEZ2023} is a product of minors in bijection with the set of simplices in the hyperplane arrangement specified by a given realizable oriented uniform matroid \cite{Orientedmatroids}.  The numerator of the CEGM integrand is much more subtle to define and we refer the reader to \cite{CEZ2022,CEZ2023}.  

It is well-understood\footnote{See \cite[Section 2]{CEGM2019}, where the principle was used to derive the Feynman diagram expansion for $m^{(3)}\left(PT^{(3)}(\mathbb{I}_6),PT^{(3)}(\mathbb{I}_6)\right)$.}, by integrating products of Parke-Taylor factors $PT(\alpha)PT(\beta)$ for various cyclic orders $\alpha$ on $\{1,\ldots, n\}$ that when $k=2$, for biadjoint scalar partial amplitudes, $m^{(2)}_n(\alpha,\beta)$ is the (signed) Laplace transform of the \textit{intersection} of two copies of the positive tropical Grassmannian $\text{Trop}^+G(2,n)$ labeled by the cyclic (color) orders $\alpha,\beta$.  When $k\ge 3$, there are some important subtleties.  The main construction of \cite{CEGM2019}, see also \cite{BC2019}, used the following alternate definition: a CEGM partial amplitude $m^{(3)}(PT^{(k)}(\alpha),PT^{(k)}(\beta))$ is equal (at least, under certain assumptions on the kinematics) to the Laplace transform of the intersection of two permutations of the positive tropical tropical Grassmannian, but there is still the question of the overall sign.  For $k=2$ the overall sign is $(-1)^{w(\alpha,\beta)}$, where $w(\alpha,\beta)$ is the relative winding number between $\alpha$ and $\beta$.

To complete the picture, one also needs to prove that CEGM integral in Equation \eqref{eq: mkn defn scattering} evaluates \cite{CEZ2022,CEZ2023} to the same number as the Laplace transform of the intersection of two \textit{chirotopal} tropical Grassmannians.  For general pairs of chirotopal tropical Grassmannians, a combinatorial interpretation of the signs is rather subtle and it remains to give a simple characterization even for the first new CEGM amplitudes $m^{(3)}_n (\mathcal{I}_L,\mathcal{I}_R)$ with $n=6,7,8$, for pairs of integrands $\mathcal{I}_L,\mathcal{I}_R$.

As for the integrands, when for example when $(k,n) = (3,6)$, up to relabeling there are four of them, one for each isomorphism class of pseudoline arrangements as in Figure \ref{fig:36gcotypeiii} in the type III case.  The four integrands are
\begin{eqnarray}\label{eq: CEGM integrands}
	&&\frac{1}{p_{123} p_{126} p_{156} p_{234} p_{345} p_{456}},\ \frac{1}{p_{123} p_{126} p_{145} p_{234} p_{356} p_{456}},\ \frac{p_{236}}{p_{123} p_{126} p_{145} p_{234} p_{256} p_{346} p_{356}},\\
	&& \frac{p_{135} p_{146} p_{236} p_{245}-p_{136} p_{145} p_{235} p_{246}}{p_{123} p_{126} p_{134} p_{145} p_{156} p_{235} p_{245} p_{246} p_{346} p_{356}},\nonumber
\end{eqnarray}
corresponding to the four isomorphism classes of rank 3 oriented uniform matroids on six elements, that is, connected components of $X(3,6)$ up to relabeling.
\begin{figure}[h!]
	\centering
	\includegraphics[width=0.63\linewidth]{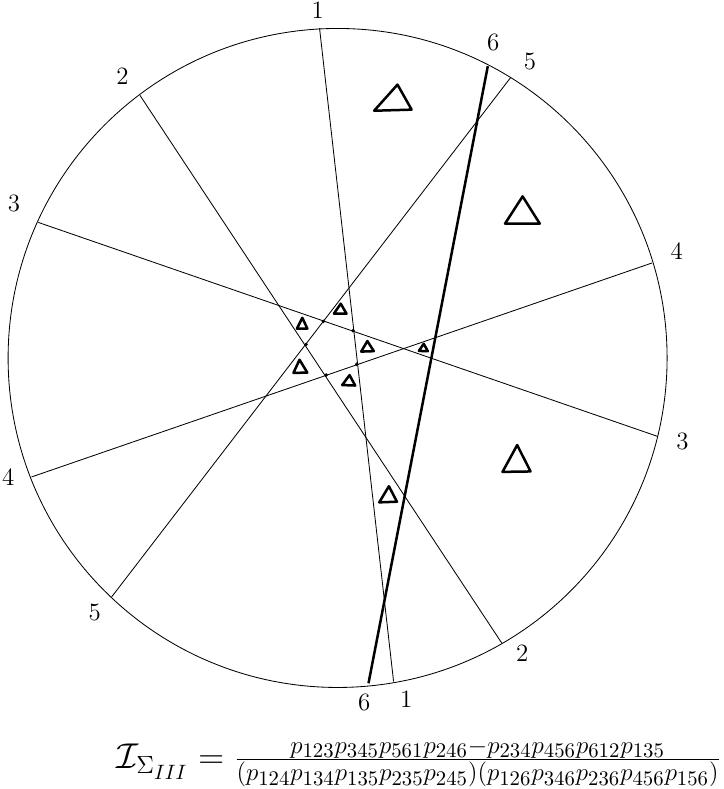}
	\caption{Integrand $\mathcal{I}_{\Sigma_{III}}$ from a type III line arrangement.  Denominator factors are in bijection with triangles in the arrangement.  The numerator is determined by identifying residues with neighboring integrands that differ by a triangle flip.  In other words, integrands for adjacent connected components of $X(3,6)$ should coincide on common codimension 1 boundaries.}
	\label{fig:36gcotypeiii}
\end{figure}

The first of these in Equation \ref{eq: CEGM integrands} is the $3$-Parke-Taylor factor, which is familiar from the study of the nonnegative Grassmannian; the others are new \cite{CEZ2023}.  See Figure \ref{fig:36gcotypeiii} for a representative type III line arrangement and its corresponding integrand.

As a special case, we abbreviate by $m^{(k)}_n$ the generalized biadjoint scalar amplitude arising when the product of the two Parke-Taylor factors $\mathcal{I}_{L} = \mathcal{I}_R = PT^{(k)}(1,2,\ldots, n) = 1\slash \prod_{j=1}^np_{j,j+1,\ldots, j+k-1}$ are inserted into the CEGM integral in Equation \eqref{eq: mkn defn scattering}; in this case, the resulting number can also be computed via the Global Schwinger Parameterization \cite{CE2020B}, using the parameterization of the \textit{positive} tropical Grassmannian $\text{Trop}^+G(k,n)$ \cite{SpeyerWilliams2003}, see also \cite{SpeyerWilliams2020,arkani2020positive}.  Here the Feynman diagram expansion is the Laplace transform of the positive tropical Grassmannian $\text{Trop}^+G(k,n)$, abbreviated $m^{(k)}_n$, given by 
\begin{eqnarray}\label{eq: mkn}
	m^{(k)}_n & = & \sum_{\mathcal{C}\in \text{Max Cones}(\text{Trop}^+G(k,n))}\frac{N(\mathcal{C})}{\prod_{X \in \text{Rays}(\mathcal{C})}X},
\end{eqnarray}
Here the numerators $N(\mathcal{C})$ are certain polynomials in (and the factors $X$ in the denominator are certain prescribed linear combinations of) the kinematic invariants, in duality with rays of $\text{Trop}^+G(k,n)$.  The numerator polynomials can in practice be computed by triangulating the cone generated by the rays $X$ and then summing the integral Laplace transforms of each simplicial cone in the triangulation.  Of course, closed formulas using the combinatorics of the rays of the tropical Grassmannian would be highly desirable!  The sum is over all (maximal) $(k-1)(n-k-1)$-dimensional cones $\mathcal{C}$ in the positive tropical Grassmannian and the product is over all rays $X$ of $\mathcal{C}$.  The Global Schwinger Parameterization, introduced in \cite{CE2020B}, tropicalizes the well-known positive parameterization of the nonnegative Grassmannian to glue together all Schwinger-parameterized Feynman diagrams into a single integral.  This parameterization has received a very interesting generalization in \cite{AFSPT2023} to higher loop order using the moduli space of Riemann surfaces.

\begin{figure}
	\centering
	\includegraphics[width=0.5\linewidth]{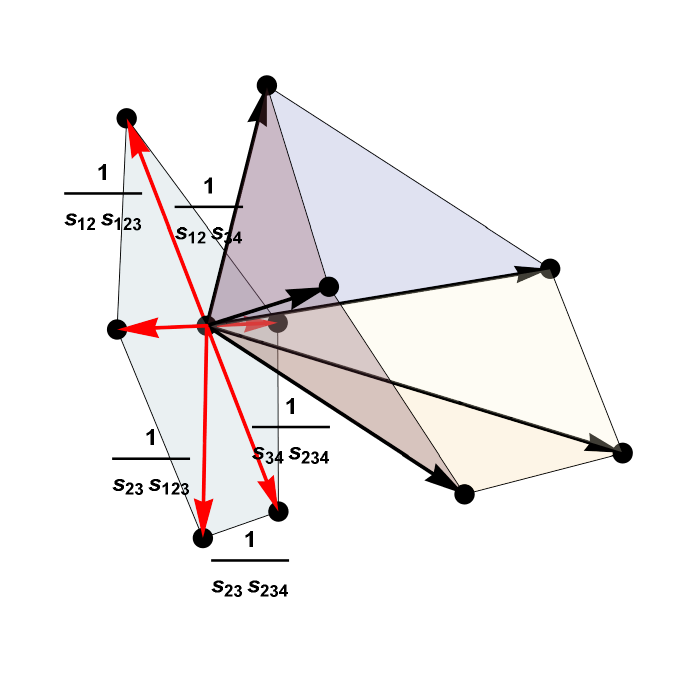}
	\caption{The Global Schwinger Parametrization \cite{CE2020B}, viewed as a projection of the positive tropical Grassmannian $\text{Trop}^+G(2,5)$. The five 2-dimensional cones correspond to individual Feynman diagrams. Rays (in red) are in duality with planar kinematic invariants $s_{i\cdots j}$.}
	\label{fig:projectiontropgrass25}
\end{figure}

From a physical point of view, the poles of $m^{(k)}_n$ have a quite novel structure.  For example, for $(k,n) = (3,6)$ there are three kinds of poles up to relabeling.  They are
$$\mathfrak{s}_{abc},\ \mathfrak{t}_{abcd} = \mathfrak{s}_{abc} + \mathfrak{s}_{abd} + \mathfrak{s}_{acd} + \mathfrak{s}_{bcd},\ \ R_{(ab,cd,ef)} = \mathfrak{t}_{abcd} + \mathfrak{s}_{cde} + \mathfrak{s}_{cdf}$$
for $\{a,b,c,d,e,f\} = \{1,2,3,4,5,6\}$.
It is important to note that the last pole is entirely new from the perspective of standard QFT; it is not obvious, but taking into account momentum conservation it has a cyclic symmetry 
$$R_{(ab,cd,ef)} = R_{(cd,ef,ab)}.$$
Nonetheless, as seen in the Generalized Feynman Diagram formalism \cite{BC2019,Early2019PlanarBasis,Early19WeakSeparationMatroidSubdivision} it is equivalent to six collections of standard QFT propagators, one for each of six 5-particle partial amplitudes.

From a combinatorial point of view this symmetry is not an accident; in fact it has a clear interpretation in terms of certain matroid subdivisions of an $5$-dimensional polytope called the hypersimplex $\Delta_{3,6}$.  Hypersimplices $\Delta_{k,n}$ are prevalent in combinatorics and geometry.

Continuing with $(k,n) = (3,6)$, the Feynman diagram construction consists of 48 compatible collections of metric trees, which leads to an expansion as a sum of 48 rational functions
$$m^{(3)}_6 = \frac{\mathfrak{t}_{1234}+\mathfrak{t}_{1256}+\mathfrak{t}_{3456}}{\mathfrak{t}_{1234} \mathfrak{t}_{3456} \mathfrak{t}_{1256} R_{12,34,56} R_{12,56,34}}+\frac{1}{\mathfrak{s}_{123} \mathfrak{s}_{456} \mathfrak{t}_{1234} \mathfrak{t}_{1456}} +\frac{1}{\mathfrak{s}_{156} \mathfrak{s}_{123} \mathfrak{t}_{1456} R_{16,45,23}}+ \text{ 45 more}.$$
The residue where any $R_{(ab,cd,ef)}=0$ is either zero or a product of \textit{three} copies of the four-particle biadjoint scalar partial amplitude.  It is nontrivial that there should be a Feynman diagram expansion \eqref{eq: mkn} which is numerically equal to Equation \eqref{eq: mkn defn scattering} when both integrands $\mathcal{I}_L,\mathcal{I}_R$ are the 3-Parke-Taylor factor $\prod_{j=1}^n p_{j,j+1,j+2}^{-1}$.  For a proof of the equality for this particular pair of CEGM integrands we refer the reader to \cite[Claim 1]{AHL2019Stringy}.  The general case for arbitrary pairs of integrands $\mathcal{I}_L,\mathcal{I}_R$ as constructed \cite{CEZ2023} from simplicial cells in pseudo-hyperplane arrangements, remains open.

We conclude this section with some brief remarks about momentum space, as formulated using the generalized spinor-helicity formalism introduced in \cite{CEGM2019}.

Momentum space for CEGM amplitudes consists of all $n$-tuples of rank $1$ matrices $\mathbb{K}_1,\ldots, \mathbb{K}_n$ of size $k\times k$ that satisfy \textit{momentum conservation}: the sum $\mathbb{K}_1+\cdots+\mathbb{K}_n$ is a matrix of corank $2$.    We write $\mathbb{K}_j = \lambda_j \otimes \tilde{\lambda}_j$ and $\lambda_j,\tilde{\lambda}_j\in \mathbb{C}^k$.

The $k$-analog of the Mandelstam invariant $s_{ij}$ is given by 
\begin{eqnarray*}
	\mathfrak{s}_{j_1\cdots j_k} & = & \det\left(\mathbb{K}_{j_1} + \cdots + \mathbb{K}_{j_k}\right).
\end{eqnarray*}
By the Cauchy-Binet identity, this may be rewritten as the product of two determinants, as
\begin{eqnarray*}
	\mathfrak{s}_{j_1\cdots j_k} = \langle j_1,\ldots, j_k\rangle \lbrack j_1,\ldots, j_k\rbrack,
\end{eqnarray*}
where $\langle j_1,\ldots, j_k\rangle = \det(\lambda_{j_1},\ldots, \lambda_{j_k})$ and $\lbrack j_1,\ldots, j_k\rbrack = \det(\tilde{\lambda}_{j_1},\ldots, \tilde{\lambda}_{j_k})$ are coordinates on the left and right factors of the twisted Cartesian product of a pair of Grassmannians, 
$$\left(G(k,n) \times G(k,n)\right)\slash (g_1,g_2)\sim (g_1 t_j,g_2 t_j^{-1}),$$
consisting of equivalence classes modulo the twisted diagonal action of the torus $\left(\mathbb{C}^\ast\right)^n$ given above.

Momentum conservation among the $n$ momentum matrices implies the $n$ independent equations $\sum_{J: J \ni j}\mathfrak{s}_J = 0$ for $j=1,\ldots, n$.

Poles $X=0$ of $m^{(k)}_n$ are given by $X=\sum_{J}\pi_J \mathfrak{s}_J$ for any ray $\pi = (\pi_J) \in \mathbb{R}^{\binom{n}{k}}$ of the (positive) tropical Grassmannian\footnote{And more generally, any ray of a given chirotopal tropical Grassmannian, in light of \cite{CEZ2022,CEZ2023}.}.

	\section{Color-Orders, Color Factors and CEGM Integrands}\label{sec: color-orders Xkn}
	
	Fix a tuple $L = (\ell_1,\ldots, \ell_{k-1})$ of distinct elements of $\{1,\ldots, n\}$.  A $(k,n)$ \textit{color order} is a $(k-1)$-tuple ${\overrightarrow{\sigma}} = (\sigma^1,\ldots,\sigma^{k-1})$ of cyclic orders $\sigma^1,\ldots,\sigma^{k-1}$ on respectively $\{1,\ldots, n\}\setminus (L \setminus \ell_j)$ for $j=1,\ldots, k-1$.

\begin{defn}\label{def: color factor Xkn}
	Given a $(k,n)$ color order ${\overrightarrow{\sigma}} = (\sigma^1,\ldots, \sigma^{k-1})$, the \textit{color factor} $c_{\overrightarrow{\sigma}}$ is the product of the $k-1$ color factors $c_{\sigma^j}$, namely
	\begin{eqnarray}
		c_{\overrightarrow{\sigma}} & = & \prod_{j=1}^{k-1}c_{\sigma^j} 
	\end{eqnarray}
\end{defn}

	Given a tuple $L$ as above, let ${\overrightarrow{\sigma}}$ be a $(k,n)$-color order.  We define a logarithmic differential form $\Omega_{\overrightarrow{\sigma}}$ on $X(k,n)$ by 
	$$\Omega_{{\overrightarrow{\sigma}}} =\Omega^{L\setminus \{\ell_{1}\}}(M_{0,n-(k-2)}(\sigma^{1}))\wedge\Omega^{L\setminus \{\ell_2\}}(M_{0,n-(k-2)}(\sigma^{2}))\wedge \cdots \wedge \Omega^{L\setminus \{\ell_{k-1}\}}(M_{0,n-(k-2)}(\sigma^{k-1})),$$
	where $\Omega^{L\setminus \{\ell_j\}}(M_{0,n-(k-2)}(\sigma^j))$ is the canonical form of the open connected component $M_{0,n-(k-2)}(\sigma^j)$ of $M_{0,n-(k-2)}$ defined by the color order $\sigma^j$, pulled back to $X(k,n)$ along the projective line
	$$H_{\ell_1} \cap \cdots \cap \widehat{H_{\ell_j}}\cap \cdots \cap H_{\ell_{k-1}},$$
	in $\mathbb{P}^{k-1}$.  This is compatible with the embedding of Equation \eqref{eq:embedding}.  Here the arrangement of hyperplanes $H_1,\ldots, H_n$ in $\mathbb{P}^{k-1}$ represents a (generic) point in $X(k,n)$.  A general theory of canonical forms has been developed recently in \cite{ABL,ABHY2018,AHL2019Stringy,arkani2020positive}.  Thus, the logarithmic forms $\Omega_{\overrightarrow{\sigma}}$ are constructed from canonical forms on $X(2,n-(k-2))$.  We define these now.  
		
		For the standard cyclic order $\sigma=(1,2,\ldots, n)$, the canonical form on the positive part $M^+_{0,n}$ of $M_{0,n}$ can be written conveniently as 
	\begin{eqnarray*}
		\Omega(M_{0,n}(\sigma)) & = & d\log\left(\frac{u_{13}}{1-u_{13}}\right) \wedge d\log\left(\frac{u_{14}u_{24}}{1-u_{14}u_{24}}\right)\wedge \cdots \wedge d\log\left(\frac{u_{1,n-1}u_{2,n-1}\cdots u_{n-2,n-1}}{1-u_{1,n-1}u_{2,n-1}\cdots u_{n-2,n-1}}\right),
	\end{eqnarray*}
	where the so-called u-variables $u_{ij} = \frac{p_{i,j+1}p_{i+1,j}}{p_{i,j}p_{i+1,j+1}}$ were first defined by Koba-Nielsen in \cite{KobaNielsen} in the context of string integrals.  We remind that the positive part $M^+_{0,n}$ is characterized by the inequalities $0<u_{ij}<1$.  In terms of Plucker variables, one has, more explicitly,
	\begin{eqnarray}
		\Omega(M_{0,n}(1,2,\ldots, n)) & = & d\log\left(\frac{p_{14}p_{23}}{p_{12}p_{34}}\right)\wedge d\log\left(\frac{p_{15}p_{34}}{p_{13}p_{45}}\right) \wedge \cdots \wedge d\log\left(\frac{p_{1,n}p_{n-2,n-1}}{p_{1,n-2}p_{n-1,n}}\right).
	\end{eqnarray}
	The form $\Omega(M_{0,n}(\sigma))$ is known to be proportional to the Parke-Taylor factor 
	$$PT(1,2,\ldots, n) = \frac{1}{p_{12}p_{23}\cdots p_{n1}}.$$
	Canonical forms (see \cite{ABL} for details on canonical forms and positive geometry more generally) for the other connected components of $M_{0,n}$ can be obtained by relabeling the indices on the Plucker coordinates in $\Omega(M_{0,n}(\sigma))$ accordingly, according to the cyclic order $\sigma$.

	For parameters $c_\sigma$, the \textit{color-dressed canonical form} $	\Omega_c(M_{0,n})$ on $M_{0,n}$ is defined by 
	\begin{eqnarray}\label{eq: color-dressed form M0n}
	\Omega_c(M_{0,n}) & = & \sum_{\alpha}\text{sgn}_n(\alpha_1,\ldots, \alpha_{n-1}) c_{(\alpha_1\cdots \alpha_{n-1}n)} \Omega(M_{0,n}(\alpha_1\cdots \alpha_{n-1}n)),
\end{eqnarray}
	where the sum is over all permutations $\alpha$ of $\{1,\ldots, n-1\}$.  The coefficient $\text{sgn}_n(\alpha_1,\ldots, \alpha_{n-1})$ is the sign of the permutation (in line notation) $(\alpha_1,\ldots, \alpha_{n-1})$; it implies that the $n$ $U(1)$ decoupling identities among forms are satisfied, when the color factor $c_{(\alpha_1\cdots \alpha_{n-1}n)}$ is replaced with  $c_{(\alpha_1\cdots \hat{j} \cdots \alpha_{n-1}n)}$ for any $j\in \{1,\ldots, n\}$.  Here $(\alpha_1\cdots \alpha_{n-1}n)$ denotes the prescribed cyclic order.
	
	Here we remind that the $c_\sigma$ are \textit{color factors}, that is products of traces of given collection of generators 
	$$T^{a_1},\ldots, T^{a_n} \in \mathfrak{su}(N)$$
	of the special unitary group $SU(N)$ for some large $N$, where we take $a_1,\ldots, a_n\in \{1,\ldots, N^2-1\}$ to index (a subset of) the elements in the standard basis of $\mathfrak{su}(N)$.  
	\begin{defn}
		By a \textit{$U(1)$ decoupling identity} we mean a minimal set of partial amplitudes which sums to zero when one of the generators $T^{a_j}$ is replaced by a multiple of the identity in $U(N)$.
	\end{defn}
	So, $U(1)$-decoupling amounts to identities\footnote{In matroid theory, such a set of Parke-Taylor factors would be called a \textit{circuit}, since the $n-1$ summands satisfy the unique given linear relation.} of the form
	$$PT(1,2,\ldots, n-1,n)+PT(1,2,\ldots, n,n-1) + \cdots + PT(1,n,2,\ldots, n-2,n-1)=0$$
	among Parke-Taylor factors.
	
	We now move to give the Lie-theoretic realization of color factors needed to finish the definition of color-dressed CEGM amplitudes.  This fills in an important missing piece from recent work with Cachazo and Zhang \cite{CEZ2022,CEZ2023}: in that work, it was possible to produce sets of integrands which were to satisfy a single identity with all $\pm1$ coefficients, but there was no general rule for determining the signs.  We formulate the notion of color-dressed logarithmic differential forms; color-dressed amplitudes are obtained from them via the CEGM formula, by applying the formula in Appendix \ref{sec: CEGM integrand calculation} to compute the corresponding integrand.
	
	Our notion of color factors builds on recent work of Cachazo, Early and Zhang \cite{CEZ2022}, which extended the construction of CEGM amplitudes \cite{CEGM2019} from integrands defined for convex point configurations (i.e., the so-called positive configuration spaces \cite{arkani2020positive}, where the integrand is the $k$-Parke-Taylor factor) to integrands associated to nonconvex point configurations representing chirotopal configuration spaces $X^\chi(k,n)$ for (at least in principle) any realizable oriented uniform matroid $\chi \in \{\pm1\}^{\binom{n}{k}}$.  Amplitudes were computed explicitly for all values of $(k,n)$ possible with existing technology, for $(3,6),(3,7),(3,8)$, with partial results for $(3,9)$ and $(4,8)$.  In \cite{CEZ2023}, algorithms were developed to generate irreducible decoupling identities for $X(3,n)$.  However, an explicit Lie-theoretic realization of the color factors themselves was missing.  
	
	In what follows, we give that realization.  More precisely, we introduce the notion of $(k,n)$-\textit{color factors} for any $k$, which are certain degree $k-1$ products of standard color factors, that is, where each standard color factor is the trace of a product of given generators $T^{a_1},\ldots, T^{a_n} \in \mathfrak{su}(N)$ of the special unitary group $SU(N)$.  It is important that $U(1)$ decoupling identities are automatically satisfied in our construction.

	\vspace{.1in}
	
	We fix a collection of generators $T^{a_1},\ldots, T^{a_n}\in \mathfrak{su}(N)$ of the special unitary group $SU(N)$, where we assume $N\gg n$ for sake of genericity.
	
	Given a $(k-1)$-tuple $L = (\ell_1,\ldots, \ell_{k-1})$ of distinct elements of $\{1,\ldots, n\}$, for each color order ${\overrightarrow{\sigma}}$, where ${\overrightarrow{\sigma}} =(\sigma^1,\ldots, \sigma^{k-1})$ consists of cyclic orders 		
$$\sigma^1 = \left(\sigma^{1}_1,\ldots, \sigma^1_{n-(k-2)}\right),\ \ldots,\ \sigma^{k-1} = \left(\sigma^{k-1}_1,\ldots, \sigma^{k-1}_{n-(k-2)}\right)$$
on respectively 
$$\left\{1,\ldots, n\right\}\setminus (L\setminus \{\ell_1\}),\ \ldots,\ \left\{1,\ldots, n\right\}\setminus (L\setminus \{\ell_{k-1}\}),$$ the color factor $c_{\overrightarrow{\sigma}}$ is a degree $k-1$ product of traces,
\begin{eqnarray}\label{eq: color factor}
	c_{\overrightarrow{\sigma}} & = & \text{tr}\left(T^{a_{\sigma^1_1}}T^{a_{\sigma^1_2}}\cdots T^{a_{\sigma^1_{n-(k-2)}}}\right)\cdots \text{tr}\left(T^{a_{\sigma^{k-1}_1}}T^{a_{\sigma^{k-1}_2}}\cdots T^{a_{\sigma^{k-1}_{n-(k-2)}}}\right)
\end{eqnarray}
	
It is easier to state for small $(k,n)$, where the notation is less cumbersome.

	For example, for $(k,n) = (3,5)$, if $L = (1,2)$ and ${\overrightarrow{\sigma}} = \{(2345),(1345)\}$, then 
	$$c_{{\overrightarrow{\sigma}}} = \text{tr}(T^{a_2}T^{a_3}T^{a_4}T^{a_5}) \text{tr}(T^{a_1}T^{a_3}T^{a_4}T^{a_5}).$$
	Incidentally, for readers familiar with the construction of \cite{CEZ2022,CEZ2023}, the color order ${\overrightarrow{\sigma}}$ is the projection onto the first two slots of two different Generalized Color Orders	(GCOs), namely
	$$((2345),(1345),(1245),(1235),(1234)),\text{ and } ((2543),(1543),(1254),(1253),(1243))$$
	which are descendants of the global cyclic color orders $(12345)$ and $(12543)$.  A GCO $\overrightarrow{\sigma} \in \text{CO}^{(3)}_n$ is said to be \textit{descendant} if it is obtained by deleting the labels $1,2,\ldots, n$ in turn from a single, global cyclic order.  

	Generalized Color Orders axiomatize the combinatorial data contained in the set of projections $X(k,n) \rightarrow X(k,k+2)$, noting that by duality we have $X(k,k+2) \simeq X(2,k+2) = M_{0,k+2}$.
	\begin{defn}[\cite{CEZ2022}]
		Fix an arrangement of hyperplanes $H_1,\ldots, H_N$ in generic position in the real projective space $\mathbb{P}^{k-1}$.  A $(k,n)$ Generalized Color Order (GCO) $\Sigma$ is the collection of dihedral orders, induced on the projective lines defined by taking all $(k-2)$-fold intersections of the hyperplanes $H_1,\ldots, H_n$.
	\end{defn}
	Here let us point out that our present definition of \textit{color orders} (without the word generalized) consist of $(k-1)$-element collections of \textit{cyclic} orders on $n-(k-2)$ elements, which by construction can be identified with sub-collections of GCOs in the sense of \cite{CEZ2022}.  However, we note that a GCO is a collection of $\binom{n}{k-2}$ dihedral orders on $n-(k-2)$ elements, whereas our color orders are certain $(k-1)$-element collections of cyclic orders on $n-(k-2)$ elements.
	
	Continuing with our example, again for $(k,n) = (3,5)$, if $L = (2,1)$ and ${\overrightarrow{\sigma}} = ((2345),(1354))$, then similarly
	$$c_{{\overrightarrow{\sigma}}} = \text{tr}(T^{a_2}T^{a_3}T^{a_4}T^{a_5})\text{tr}(T^{a_1}T^{a_3}T^{a_5}T^{a_4}) ,$$
	but now the color order ${\overrightarrow{\sigma}}$ is the projection onto the first two slots of exactly one GCO, 
	$$((2345),(1354),(1254),(1325),(1324)),$$
	which descends from the unique global color order $(13254)$.
	
	Finally, for $(k,n) = (4,7)$, if for example $L = (1,2,3)$ and ${\overrightarrow{\sigma}} = \{(14756),(25476),(36547)\}$, then 
	$$c_{{\overrightarrow{\sigma}}} = \text{tr}\left(T^{a_1}T^{a_4}T^{a_7}T^{a_5}T^{a_6}\right)\text{tr}\left(T^{a_2}T^{a_5}T^{a_4}T^{a_7}T^{a_6}\right)\text{tr}\left(T^{a_3}T^{a_6}T^{a_5}T^{a_4}T^{a_7}\right).$$
	We invite the reader to compute the number of GCOs that project to ${\overrightarrow{\sigma}}$, using (for example) the ancillary data provided in \cite{CEZ2022}.
	
	We now come to our first main construction.  In what follows, the forms $\Omega_{\overrightarrow{\sigma}}$ are grouped together according to the tuple $L = (\ell_1,\ldots, \ell_{k-1})$.  This means that for each fixed $L$, letting ${\overrightarrow{\sigma}}$ vary over all collections of cyclic orders compatible with $L$, then the set of all such forms $\Omega_{\overrightarrow{\sigma}}$'s satisfy their own set of standard shuffle identities among Parke-Taylor factors, such as, in particular, U(1) decoupling.

	\begin{defn}
		Given a $(k-1)$-tuple $L =(\ell_1,\ldots, \ell_{k-1})$ of distinct elements $\ell_j\in \{1,\ldots, n\}$, for any $n$-tuple $(T^{a_1},\ldots, T^{a_n})$, the \textit{color-dressed form} $\Omega^{L}_\mathbf{c}(X(k,n))$ is the wedge product of $k-1$ pulled back color-dressed forms,
	\begin{eqnarray}\label{eq: color dressed form generator}
	\Omega^{L}_\mathbf{c}(X(k,n)) & = & \Omega^{L\setminus\{\ell_{1}\}}_{c}\wedge \Omega^{L\setminus\{\ell_{2}\} }_{c}\wedge \cdots \wedge \Omega^{L\setminus\{\ell_{k-1}\}}_{c}.
\end{eqnarray}
\end{defn}

	\begin{thm}\label{thm: U(1) decoupling kn}
		For each fixed $(k-1)$-tuple $L$, the form $\Omega^{L}_c(X(k,n))$ satisfies the $U(1)$-decoupling identity: when one of the generators $T^{a_j}$ is replaced with a scalar multiple of the identity in $U(N)$, then $\Omega^{L}_c(X(k,n))$ vanishes identically. 
	\end{thm}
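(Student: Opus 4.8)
The plan is to deduce the general-$k$ statement from its $k=2$ incarnation. Recall from the discussion following Equation~\eqref{eq: color-dressed form M0n} that the color-dressed canonical form $\Omega_c(M_{0,m})$ attached to a set of $m$ marked points carrying generators $\{T^{a_p}\}$ has the following property: if the generator carried by one of the marked points is replaced by a scalar multiple of the identity, then $\Omega_c(M_{0,m})$ vanishes identically. The reason, which I will take as given, is that after pulling out the scalar each color factor $c_{(\alpha_1\cdots\alpha_{m-1}m)}$ collapses to $c_{(\alpha_1\cdots\hat{j}\cdots\alpha_{m-1}m)}$, so grouping the terms of Equation~\eqref{eq: color-dressed form M0n} according to their collapsed color factor writes the sum as a combination of photon-decoupling relations among Parke--Taylor forms, and the sign $\text{sgn}_m$ is exactly what makes each such group vanish.

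Now fix $L=(\ell_1,\ldots,\ell_{k-1})$ and set $S_t=\{1,\ldots,n\}\setminus(L\setminus\{\ell_t\})$ for $t=1,\ldots,k-1$, so that the $t$-th wedge factor $\Omega^{L\setminus\{\ell_t\}}_{c}$ of $\Omega^{L}_{\mathbf{c}}(X(k,n))$ in Equation~\eqref{eq: color dressed form generator} is the pull-back to $X(k,n)$, along the line $H_{\ell_1}\cap\cdots\cap\widehat{H_{\ell_t}}\cap\cdots\cap H_{\ell_{k-1}}$, of the color-dressed canonical form $\Omega_c(M_{0,n-(k-2)})$ whose $n-(k-2)$ marked points are labelled by $S_t$ and carry the generators $\{T^{a_p}:p\in S_t\}$. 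The first step is the elementary observation that the generator $T^{a_j}$ occurs in the $t$-th factor precisely when $j\in S_t$, i.e.\ precisely when $j\notin L\setminus\{\ell_t\}$. This gives two cases: if $j\notin L$, then $j\in S_t$ for every $t$, so $T^{a_j}$ is a generator of \emph{each} of the $k-1$ factors; while if $j=\ell_m\in L$, then $\ell_m\in S_t$ holds if and only if $\ell_m\notin L\setminus\{\ell_t\}$, that is if and only if $t=m$, so $T^{a_j}$ is a generator of the $m$-th factor only.

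The second step finishes the argument. Replace $T^{a_j}$ by a scalar multiple $\lambda\,\mathrm{Id}$ of the identity. If $j\notin L$, the $k=2$ fact applied to each $M_{0,n-(k-2)}$ (with $j$ as the distinguished marked point) shows that every underlying form $\Omega_c(M_{0,n-(k-2)})$ vanishes identically; since the substitution acts only on the scalar coefficients $\text{sgn}(\cdot)\,c_{(\cdot)}$ and pull-back of differential forms is linear over scalars, the two operations commute, and each pulled-back factor $\Omega^{L\setminus\{\ell_t\}}_{c}$ on $X(k,n)$ therefore vanishes identically. If instead $j=\ell_m$, the same fact applied to the single label set $S_m$ shows that the $m$-th factor $\Omega^{L\setminus\{\ell_m\}}_{c}$ vanishes identically. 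In either case $\Omega^{L}_{\mathbf{c}}(X(k,n))$ is a wedge product at least one of whose factors is identically zero, hence $\Omega^{L}_{\mathbf{c}}(X(k,n))$ vanishes identically, which is the claimed $U(1)$-decoupling identity.

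The only genuine computation is hidden in the $k=2$ input: one must check that the sign $\text{sgn}_m$ in Equation~\eqref{eq: color-dressed form M0n} is the one that turns each grouped sum into a (signed) photon-decoupling relation among Parke--Taylor forms on $M_{0,m}$. This is a finite, elementary verification --- indeed it is part of what the sign in Equation~\eqref{eq: color-dressed form M0n} was designed to accomplish --- so I would isolate it as a lemma (or cite it) rather than re-derive it inside this proof; it is the one place I expect any real work. The substantive point of the present theorem is structural rather than computational: because $\Omega^{L}_{\mathbf{c}}(X(k,n))$ is a \emph{single} wedge of pull-backs of planar ($k=2$) objects, it inherits its $U(1)$ decouplings one wedge factor at a time and satisfies them for each tuple $L$ separately --- as stressed in the introduction, the essential complexity only materializes when the forms $\Omega^{L}_{\mathbf{c}}(X(k,n))$ are superposed over all $L$.
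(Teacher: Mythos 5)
Your proposal is correct and follows essentially the same route as the paper: the same case split on whether the decoupled label $j$ lies in $L$ (in which case exactly one wedge factor sees $T^{a_j}$ and vanishes by the standard $k=2$ decoupling) or not (in which case all $k-1$ factors vanish), with the conclusion that the wedge product is killed by any one identically vanishing factor. Your extra remarks on why the sign $\text{sgn}_m$ in Equation~\eqref{eq: color-dressed form M0n} forces the $k=2$ vanishing only make explicit what the paper cites as the standard $U(1)$ decoupling among Parke--Taylor forms.
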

	
	\begin{proof}

		Suppose we want to decouple the $n^\text{th}$ particle, that is we set $T^{a_n}$ to be any multiple of the identity in $U(N)$, say $T^{a_n} = \lambda\text{id}_{N\times N}$ for $\vert \lambda \vert = 1$, so in particular $T^{a_n}$ now commutes with all of $U(N)$.  There are two cases.  
		
		If $n \in L$, then exactly one of the $k-1$ factors $\Omega^{(L\setminus\{n\})}_{c}$ of 
		$$ \Omega^{L\setminus\{\ell_{1}\}}_{c}\wedge \Omega^{L\setminus\{\ell_{2}\}}_{c}\wedge \cdots \wedge \Omega^{L\setminus\{\ell_{k-1}\}}_{c}$$
		is modified, and after setting $T^{a_n} = \lambda\text{id}_{N\times N}$ then we have a standard $U(1)$ decoupling identity since now $\Omega^{L\setminus\{n\}}_{c}\equiv0$, while $\Omega^{L\setminus\{\ell\}}_{c}\not\equiv0$ for all $\ell\not=n$.  On the other hand, if $n \not\in L$ then \textit{all} $k-1$ factors vanish identically, completing the proof.
		
	\end{proof}

		As a consequence of Theorem \ref{thm: U(1) decoupling kn}, \textit{any} linear combination of the forms $\Omega^{(L)}_c(X(k,n))$, as $L$ varies, also satisfies the $n$ $U(1)$ decoupling identities, which opens the door to the possibility of additional constraints.  For example, one can construct a color-dressed CEGM integrand (using the formula in Appendix \ref{sec: CEGM integrand calculation}) which is left invariant by relabeling.  See Section \ref{sec: permutation invariance}, where we initiate the study of permutation invariant color-dressed integrands and amplitudes, for which we have to enlarge the color algebra, that is to replace the adjoint representation of $\mathfrak{su}(N)$ with its $(k-1)^\text{st}$ tensor power $\mathfrak{su}(N)^{\otimes(k-1)}$.

	\section{Examples: Forms, Integrands, GCOs and Newton Polytopes}\label{sec: examples}
	
	With $L = (2,1)$, the differential form defined by the color order
	$$\overrightarrow{\sigma} = ((23456),(13456)),$$
	is 
	\begin{eqnarray*}
		\Omega_{((23456),(13456))} & = & d\log\left(\frac{w^{(1)}_{24}}{w^{(1)}_{35}w^{(1)}_{36}}\right) \wedge d\log\left(\frac{w^{(1)}_{25}w^{(1)}_{35}}{w^{(1)}_{46}}\right) \wedge d\log\left(\frac{w^{(2)}_{35}}{w^{(2)}_{14}w^{(2)}_{46}}\right) \wedge d\log\left(\frac{w^{(2)}_{36}w^{(2)}_{46}}{w^{(2)}_{15}}\right)\\
		& = & d\log\left(\frac{w_{124}}{w_{135}w_{235}w_{136}w_{236}}\right) \wedge d\log\left(\frac{w_{125}w_{135}w_{235}}{w_{146}w_{246}w_{346}}\right) \\
		& \wedge & d\log\left(\frac{w_{235}}{w_{124}w_{134}w_{246}w_{346}}\right) \wedge d\log\left(\frac{w_{236}w_{246}w_{346}}{w_{125}w_{135}w_{145}}\right).
	\end{eqnarray*}
	In this representation, 11 distinct planar cross-ratios $w_{abc}$ are present, see Appendix \ref{sec:planar cross-ratios} for the definition.
	
	Here we list some integrands for $n=5,6,7$.  In particular, 
	$$\Omega_{((2345),(1345))}, \Omega_{((23456),(13456))},\Omega_{((234567),(134567))}$$
	are proportional to respectively
	$$\frac{p_{124}}{p_{123} p_{125} p_{134} p_{145} p_{234} p_{245}},\frac{p_{124} p_{125}}{p_{123} p_{126} p_{134} p_{145} p_{156} p_{234} p_{245} p_{256}},\frac{p_{124} p_{125} p_{126}}{p_{123} p_{127} p_{134} p_{145} p_{156} p_{167} p_{234} p_{245} p_{256} p_{267}},\ldots $$
	
	Let us now compute some Newton polytopes.  Consider the parameterization\footnote{There are obvious analogs of the parameterization for other $(k,n)$ color orders, that is, other $(k-1)$-tuples of cyclic orders.}
	$$M(x_{i,j}) = \left(
	\begin{array}{ccccccc}
		1 & 0 & 0 & x_{1,1} & x_{1,1}+x_{1,2} & x_{1,1}+x_{1,2}+x_{1,3} &\\
		0 & 1 & 0 & -x_{2,1} & -x_{2,1}-x_{2,2} & -x_{2,1}-x_{2,2}-x_{2,3} & \cdots \\
		0 & 0 & 1 & 1 & 1 & 1 & \\
	\end{array}
	\right).$$
	When $n=5$, keeping the first five columns of $M(x_{i,j})$, the Newton polytope of the product of all maximal minors is a hexagon (see also Equation \eqref{eq: 35 special} for the amplitude) and the form is proportional to 
	$$\frac{p_{124}}{p_{123} p_{125} p_{134} p_{145} p_{234} p_{245}}.$$
	When $n=6$, keeping the first six columns, we find that the Newton polytope of the product of all maximal minors is a four-dimensional polytope with 23 facets and 80 vertices.  The full f-vector is $(80, 162, 105, 23, 1)$.  We remark that even though the polynomials are \textit{not} subtraction free, at least for $n=5,6$ where we have checked directly, still it turns out that the Newton polytope is compatible with the CEGM amplitude obtained from
	\begin{eqnarray}\label{eq: eightGCOs Integrand}
		\mathcal{I}_{((23456),(13456))} = \frac{p_{124} p_{125}}{p_{123} p_{126} p_{134} p_{145} p_{156} p_{234} p_{245} p_{256}},
	\end{eqnarray}
	in the sense that the (inward) normal fan of the Newton polytope of the product of all maximal minors of $M(x_{i,j})$ provides the Global Schwinger Parameterization for the CEGM amplitude with the square of  $\mathcal{I}_{((23456),(13456))}$ as its integrand.
	
	There is another expression for Equation \eqref{eq: eightGCOs Integrand} as a linear combination of the CEGM integrands obtained in \cite{CEZ2023} labeled by the following eight GCOs, all of which have in common the first two entries:
	\begin{eqnarray*}
		\begin{array}{c}
			(23456,13456,12456,12356,12346,12345) \\
			(23456,13456,12654,12653,12643,12543) \\
			(23456,13456,12456,12365,12364,12354) \\
			(23456,13456,12546,12536,12436,12345) \\
			(23456,13456,12645,12635,12634,12543) \\
			(23456,13456,12654,12563,12463,12453) \\
			(23456,13456,12465,12365,12634,12534) \\
			(23456,13456,12564,12563,12436,12435) \\
		\end{array}
	\end{eqnarray*}
	The corresponding eight integrands are
	\begin{eqnarray*}
		& & \frac{1}{p_{123} p_{126} p_{156} p_{234} p_{345} p_{456}},\frac{1}{p_{123} p_{126} p_{134} p_{256} p_{345} p_{456}},\frac{1}{p_{123} p_{126} p_{145} p_{234} p_{356} p_{456}},\frac{1}{p_{123} p_{126} p_{156} p_{245} p_{345} p_{346}}\\
		& & \frac{1}{p_{123} p_{126} p_{145} p_{256} p_{345} p_{346}},\frac{1}{p_{123} p_{126} p_{134} p_{245} p_{356} p_{456}},\frac{p_{236}}{p_{123} p_{126} p_{145} p_{234} p_{256} p_{346} p_{356}},\\
		& & \frac{p_{136}}{p_{123} p_{126} p_{134} p_{156} p_{245} p_{346} p_{356}}.
	\end{eqnarray*}
	We leave it as an exercise to find the unique eight coefficients $\pm1$ that express Equation \eqref{eq: eightGCOs Integrand} in terms of these eight integrands.
	
	Let us now record an interesting connection to zonotopal tilings.
	
	\begin{rem}
		There is a well-known relationship between oriented matroids and zonotopal tilings, so it is natural to expect a connection in our story as well; indeed, our computations support this expectation.  For $k=3$ and $n=5,6,7,8$ we get respectively $2,8,62,908$ Generalized Color Orders that contain the standard cyclic orders $(234\cdots n)$ and $(134\cdots n)$ on their first two slots.  See O.E.I.S entry A006245.  In other words, we are enumerating the tiles in the convolution of two curvy associahedra that are defined by the pair of cyclic orders $((23\cdots n),(134\cdots n))$. The enumeration we obtain agrees with the number of zonotopal (rhombus) tilings of a centrally symmetric $2(n-2)$-gon!  Indeed, this correspondence can be made into an explicit bijection, but the details are left to future work.  See for example \cite{EdelmanReiner} for some key results on rhombus tilings, and \cite{Orientedmatroids} and for a textbook discussion of oriented matroids.
	\end{rem}
	
	Let us finally express the Parke-Taylor form on the positive configuration space $X^+(3,6)$ as a $d\log$ form in the planar cross-ratios $w_{abc}$:
	\begin{eqnarray}\label{eq: canonical form X36}
		\Omega_{((23456),(12356))} & = & d\log\left(\frac{w^{(1)}_{24}}{w^{(1)}_{35}w^{(1)}_{36}} \right) \wedge d\log\left(\frac{w^{(1)}_{25}w^{(1)}_{35}}{w^{(1)}_{46}} \right) \wedge d\log\left(\frac{w^{(4)}_{15}}{w^{(4)}_{26}w^{(4)}_{36}} \right)\wedge d\log\left(\frac{w^{(4)}_{25}}{w^{(4)}_{26}w^{(4)}_{13}} \right)\nonumber\\
		& = & d\log \left(\frac{w_{124}}{w_{135} w_{136} w_{235} w_{236}}\right)\wedge d\log \left(\frac{w_{125} w_{135} w_{235}}{w_{146} w_{246} w_{346}}\right)\\
		&\wedge  & d\log \left(\frac{w_{145}}{w_{246} w_{256} w_{346} w_{356}}\right)\wedge d\log \left(\frac{w_{245} w_{246} w_{256}}{w_{134} w_{135} w_{136}}\right)\nonumber\\
		& = & d\log\left(\frac{p_{125} p_{134}}{p_{123} p_{145}}\right)\wedge d\log\left(\frac{p_{126} p_{145}}{p_{124} p_{156}}\right)\wedge d\log\left(\frac{p_{146} p_{245}}{p_{124} p_{456}}\right)\wedge d\log\left(\frac{p_{124} p_{345}}{p_{145} p_{234}}\right).\nonumber
	\end{eqnarray}
	We note for posterity that all $14$ planar cross-ratios appear.
	
	It is not difficult to verify using the formula in in Appendix \ref{sec: CEGM integrand calculation} that this form is proportional to the Parke-Taylor factor
	$$PT^{(3)}(1,2,3,4,5,6) =\frac{1}{p_{123} p_{234} p_{345} p_{456} p_{561} p_{612}}.$$
	It is natural to wonder if the Parke-Taylor forms can always be expressed as a single dlog in this way, but it is not the case, as one can see already happens for $X(3,7)$.  In this case, the Parke-Taylor forms can be expressed as a linear combination of (no fewer than) two dlog forms of the form $\Omega_{{\overrightarrow{\sigma}}}$.
	
	Let us continue just a bit more, hewing more closely to \cite{CEZ2022,CEZ2023}.  We following the conventions there, in calling the four types of connected components of $X(3,6)$ types $0,I,II,III$.  
	
	The type I CEGM integrand associated to the GCO
	$$\Sigma=(23456,13465,12654,12653,13426,13425)$$
	is 
	$$\mathcal{I}_\Sigma =\frac{1}{p_{123} p_{134} p_{156} p_{246} p_{256} p_{345}}.$$
	This is proportional to the following linear combination of dlog forms:
	\begin{eqnarray*}
		\Omega_{((23456),(12653))} + \Omega_{((23456),(12643))}.
	\end{eqnarray*}
	We find that $\Omega_{((23456),(12653))}$ is proportional to $$\frac{p_{146}}{p_{123} p_{126} p_{134} p_{156} p_{264} p_{354} p_{465}},$$
	while $ \Omega_{((23456),(12643))}$ is proportional to 
	$$\frac{1}{p_{123} p_{126} p_{134} p_{256} p_{345} p_{456}}.$$
	Now let us argue that a type II CEGM integrand is realized as a linear combination of our dlog forms.  First, $\Omega_{((25436),(12653))}$ is proportional to  
	$$-\frac{1}{p_{125} p_{126} p_{134} p_{236} p_{345} p_{456}}+\frac{p_{346}}{p_{125} p_{134} p_{136} p_{236} p_{246} p_{345} p_{456}} = \frac{p_{146}}{p_{125} p_{126} p_{134} p_{136} p_{246} p_{345} p_{456}},$$
	having taken the sum of the CEGM integrands associated to the pair of GCOs having the same fixed slots 1 and 4,
	$$(25436,15436,12654,12653,12643,12345),\ (25436,13645,14526,12653,12643,13245).$$
	These are GCOs of types I and II, respectively.  Now both types O and I are already dlog forms; from the two-term expression we have constructed it follows that the type II integrand is a sum of two dlog forms.	 
	
	It remains to find an expression for the type III integrand.  A direct computation reveals that a type III is involved in the expression for three GCOs, all of which are already expressed in terms of dlog forms.

	We conclude with an irreducible decoupling identity.  We shuffle $6$ in the second slot; we expect to find the standard decoupling identity on forms,
	\begin{eqnarray}\label{eq: decoupling identity 36}
	\Omega_{((23456),(13456))} - \Omega_{((23456),(13465))} + \Omega_{((23456),(13645))} - \Omega_{((23456),(16345))} = 0,
	\end{eqnarray}
	noting that the minus signs are added for identities among differential forms and not Parke-Taylor factors, as in Equation \eqref{eq: color-dressed form M0n}.
	
	Our strategy is to compute all GCOs such that the first entry is $(23456)$ and the second entry is among the set of cyclic orders
	$$\{(13456),(16345),(13465),(13645)\}.$$
	There are $8,2,3,2$ GCOs that project to these four cyclic orders, respectively.  These 15 GCOs are respectively:
	$$
	\begin{array}{cccccc}
		23456 & 13456 & 12456 & 12356 & 12346 & 12345 \\
		23456 & 13456 & 12654 & 12653 & 12643 & 12543 \\
		23456 & 13456 & 12456 & 12365 & 12364 & 12354 \\
		23456 & 13456 & 12546 & 12536 & 12436 & 12345 \\
		23456 & 13456 & 12645 & 12635 & 12634 & 12543 \\
		23456 & 13456 & 12654 & 12563 & 12463 & 12453 \\
		23456 & 13456 & 12465 & 12365 & 12634 & 12534 \\
		23456 & 13456 & 12564 & 12563 & 12436 & 12435 \\
	\end{array},$$
	with integrands
	\begin{eqnarray*}
	& & \frac{1}{p_{123} p_{126} p_{156} p_{234} p_{345} p_{456}},\frac{1}{p_{123} p_{126} p_{134} p_{256} p_{345} p_{456}},\frac{1}{p_{123} p_{126} p_{145} p_{234} p_{356} p_{456}},\\
	& & \frac{1}{p_{123} p_{126} p_{156} p_{245} p_{345} p_{346}},\frac{1}{p_{123} p_{126} p_{145} p_{256} p_{345} p_{346}},\frac{1}{p_{123} p_{126} p_{134} p_{245} p_{356} p_{456}},\\
	& & \frac{p_{236}}{p_{123} p_{126} p_{145} p_{234} p_{256} p_{346} p_{356}},\frac{p_{136}}{p_{123} p_{126} p_{134} p_{156} p_{245} p_{346} p_{356}},
\end{eqnarray*}
	and 
	$$
	\begin{array}{cccccc}
		23456 & 16345 & 15426 & 15326 & 14326 & 12345 \\
		23456 & 16345 & 14526 & 13526 & 13426 & 12345 \\
	\end{array},$$
	with integrands
	\begin{eqnarray*}
		\frac{1}{p_{126} p_{145} p_{156} p_{234} p_{236} p_{345}},\frac{1}{p_{126} p_{134} p_{156} p_{236} p_{245} p_{345}},
	\end{eqnarray*}
	and 
	$$
	\begin{array}{cccccc}
		23456 & 13465 & 12465 & 12365 & 14326 & 14325 \\
		23456 & 13465 & 12654 & 12653 & 13426 & 13425 \\
		23456 & 13465 & 12645 & 12635 & 14326 & 13425\\
	\end{array},$$
	with integrands
	\begin{eqnarray*}
		\frac{1}{p_{123} p_{145} p_{156} p_{234} p_{256} p_{346}},\frac{1}{p_{123} p_{134} p_{156} p_{246} p_{256} p_{345}},\frac{p_{456}}{p_{123} p_{145} p_{156} p_{246} p_{256} p_{345} p_{346}}
	\end{eqnarray*}
	and
	$$
	\begin{array}{cccccc}
		23456 & 13645 & 12645 & 15326 & 14326 & 13245 \\
		23456 & 13645 & 12654 & 13526 & 13426 & 13245 \\
	\end{array}$$
	with integrands
	\begin{eqnarray*}
		\frac{1}{p_{123} p_{145} p_{156} p_{236} p_{246} p_{345}},\frac{p_{234}}{p_{123} p_{134} p_{156} p_{236} p_{245} p_{246} p_{345}}.
	\end{eqnarray*}
	Here the integrands are given up to sign; we leave it to the reader to find the coefficients $\pm1$ which yield the expressions obtained using differential forms.  Namely: either by summing the integrands in each group or via the formula in Appendix \ref{sec: CEGM integrand calculation}, the claim is that the four differential forms are proportional to
\begin{eqnarray*}
	\Omega_{((23456),(13456))} & \sim & \frac{p_{124} p_{125}}{p_{123} p_{126} p_{134} p_{145} p_{156} p_{234} p_{245} p_{256}}\\
	\Omega_{((23456),(13465))} & \sim & \frac{p_{124}}{p_{126} p_{134} p_{145} p_{165} p_{234} p_{236} p_{245}}\\
	\Omega_{((23456),(13645))} & \sim  & \frac{p_{124}}{p_{123} p_{134} p_{145} p_{156} p_{234} p_{246} p_{265}}\\
	\Omega_{((23456),(16345))} & \sim & \frac{p_{124}}{p_{123} p_{134} p_{145} p_{156} p_{236} p_{245} p_{264}}.
\end{eqnarray*}
On the level of integrands, our decoupling identity becomes
\begin{eqnarray*}
	&& \mathcal{I}_{((23456),(13456))} + \mathcal{I}_{((23456),(16345))}+\mathcal{I}_{((23456),(13465))}+\mathcal{I}_{((23456),(13645))}\\
	& = & \frac{p_{124} p_{125}}{p_{123} p_{126} p_{134} p_{145} p_{156} p_{234} p_{245} p_{256}}+\frac{p_{124}}{p_{126} p_{134} p_{145} p_{165} p_{234} p_{236} p_{245}}+\frac{p_{124}}{p_{123} p_{134} p_{145} p_{156} p_{234} p_{246} p_{265}}\\
	& + & \frac{p_{124}}{p_{123} p_{134} p_{145} p_{156} p_{236} p_{245} p_{264}}\\
	& = & 0.
\end{eqnarray*}
	This provides a quite nontrivial consistency check between our methods and the results of \cite{CEZ2022,CEZ2023}!

\section{Permutation Invariant Color-Dressed CEGM Amplitudes}\label{sec: permutation invariance}	
In this work, we have constructed color-dressed dlog forms, leading to color-dressed integrands and amplitudes for generalized biadjoint scalar amplitudes, which independently satisfy the $n$ $U(1)$ decoupling relations; however they are not permutation invariant.  However, one can construct, via the CEGM formula, more general color-dressed amplitudes superpositions of the integrands obtained from the dlog form as in Appendix \ref{sec: CEGM integrand calculation}.

We discuss the extension of this construction to fully permutation invariant color-dressed generalized biadjoint scalar amplitudes.  The novel aspect is that for sake of genericity, we replace the usual basis elements  $T^{a_1},\ldots, T^{1_n}$ of the adjoint representation with $(k-1)$-fold tensor products.  We discuss motivation in Remark \ref{rem: motivation color algebra}. 

For each $(k,n)$ with $(2\le k\le n-2)$, denote by $\mathfrak{su}(N)^{\otimes (k-1)}$ the $(k-1)$-fold tensor power $\mathfrak{su}(N)^{\otimes (k-1)} = \mathfrak{su}(N) \otimes \cdots \otimes \mathfrak{su}(N)$ of the adjoint representation of $\mathfrak{su}(N)$.

The color data is stored in a $(k-1)\times n$ matrix $(T^{a_{i,j}})$ with entries given generators $T^{a_{i,j}} \in \mathfrak{su}(N)$ of $SU(N)$ for $i=1,\ldots, k-1$ and $j=1,\ldots, n-(k-2)$.  From this data we construct elements of $\mathfrak{su}(N)^{\otimes (k-1)}$, tensor products of the form
\begin{eqnarray}\label{eq: Ta tensor}
	T^\mathbf{a} & = & T^{a_{1,j_1}}\otimes \cdots \otimes T^{a_{k-1,j_{k-1}}},
\end{eqnarray}
where $(j_1,\ldots, j_{k-1})$ is a tuple (of particle labels) in $\{1,\ldots, n\}$ which are not necessarily distinct.
Color factors $c_{\overrightarrow{\sigma}}$ are traces of certain products of matrices $T^{\mathbf{a}}$ which must simplify to a given product of $k-1$ standard color factors, as specified by $\overrightarrow{\sigma}$.  
To save notation, we will elaborate in an example: we will construct a $(4,8)$ color factor.  Fix $L = (3,2,7)$ and a color order 
$$\overrightarrow{\sigma} = ((154368),(124568),(145678))$$
compatible with $L$.  The matrix $(T^{a_{i,j}})$ is
$$(T^{a_{i,j}})=\left(
\begin{array}{cccccccc}
	T^{a_{1,1}} & \text{\sout{$T^{a_{1,2}}$}}  &T^{a_{1,3}} & T^{a_{1,4}} & T^{a_{1,5}} & T^{a_{1,6}} & \text{\sout{$T^{a_{1,7}}$}} & T^{a_{1,8}} \\
	T^{a_{2,1}} & T^{a_{2,2}} & \text{\sout{$T^{a_{2,3}}$}} & T^{a_{2,4}} & T^{a_{2,5}} & T^{a_{2,6}} & \text{\sout{$T^{a_{2,7}}$}} & T^{a_{2,8}} \\
	T^{a_{3,1}} & \text{\sout{$T^{a_{3,2}}$}} & \text{\sout{$T^{a_{3,3}}$}} & T^{a_{3,4}} & T^{a_{3,5}} & T^{a_{3,6}} & T^{a_{3,7}} & T^{a_{3,8}} \\
\end{array}
\right),$$
where the generators missing from each of the three cyclic orders have been struck out for emphasis.  From 
$\overrightarrow{\sigma}$ we define the following color factor:
$$c_{\overrightarrow{\sigma}} = \text{tr}\left(T^{a_{1,1}}T^{a_{1,5}} T^{a_{1,4}} \cdots T^{a_{1,8}}\right)\text{tr}\left(T^{a_{2,1}}T^{a_{2,2}} T^{a_{2,4}} \cdots T^{a_{2,8}}\right)\text{tr}\left(T^{a_{3,1}}T^{a_{3,4}} T^{a_{3,5}} \cdots T^{a_{3,8}}\right).$$
Now for the main point: by taking advantage of the multiplicativity of the trace on tensor products we can take a ``transpose'' and write this (non-uniquely) in a very pretty way as 
$$c_{\overrightarrow{\sigma}} = \text{tr}\left(\left(T^{a_{1,1}}\otimes T^{a_{2,1}}\otimes T^{a_{3,1}}\right)\left(T^{a_{1,5}}\otimes T^{a_{2,2}}\otimes T^{a_{3,4}}\right)\cdots \left(T^{a_{1,8}}\otimes T^{a_{2,8}}\otimes T^{a_{3,8}}\right)\right),$$
where each $T^{(a_{1,j_1},a_{2,j_2},a_{3,j_3})} := T^{a_{1,j_1}} \otimes T^{a_{2,j_2}} \otimes T^{a_{3,j_3}}$ is in $\mathfrak{su}(N)\otimes \mathfrak{su}(N) \otimes\mathfrak{su}(N)$.  So we see that $(k,n)$ color factors are certain conjugation invariant functions on the tensor product of adjoint representations $\mathfrak{su}(N)^{\otimes(k-1)} = \mathfrak{su}(N) \otimes \cdots \otimes \mathfrak{su}(N)$.

With this explicit realization of the color factors in hand, we are ready to define the full permutation invariant color-dressed amplitude.  

The formula in Appendix \ref{sec: CEGM integrand calculation} allows us to compute color-dressed differential forms $\Omega^{(L)}_\mathbf{c}$, and we write simply $\mathcal{I}^{(L)}_\mathbf{c}$ for the color-dressed integrand obtained by applying Equation \eqref{eq: integrand from dlog form} to each dlog form.

More concretely, we write $\mathcal{I}^{L}_\mathbf{c}:=\mathcal{I}\left(\Omega^{L}_\mathbf{c}\right)$, or  for the color-dressed integrand obtained from the differential form 
$$\Omega^{L}_\mathbf{c} = \Omega^{(L\setminus \{\ell_1\})}_{c_1} \wedge \cdots \wedge \Omega^{(L\setminus \{\ell_{k-1}\})}_{c_{k-1}}$$
using Appendix \ref{sec: CEGM integrand calculation}.  Here the notation $\Omega^{(L\setminus \{\ell_j\})}_{c_j}$ means that its color factors are traces of products of elements of the $j^\text{th}$ copy of $\mathfrak{su}(N)$ in $\mathfrak{su}(N)^{\otimes (k-1)}$.

\begin{defn}\label{defn: fully symmetric CEGM amplitude}
	The permutation invariant color-dressed integrand $\mathcal{I}_\mathbf{c}$ is given by 
	\begin{eqnarray}\label{eq: color dressed integrand generator perm invariant}
		\mathcal{I}_\mathbf{c} & = & \sum_{\overrightarrow{\sigma} \in \text{CO}^{(k)}_n}c_{\overrightarrow{\sigma}} \mathcal{I}_{\overrightarrow{\sigma}}
	\end{eqnarray}
	where the sum is over the set $\text{CO}^{(k)}_n$ of all type $(k,n)$-color orders $\overrightarrow{\sigma} = (\sigma^1,\ldots,\sigma^{k-1})$.  Here
	$$c_{\overrightarrow{\sigma}} = \prod_{j=1}^{k-1}\text{tr}\left(T^{a_{j,\sigma^j_1}}T^{a_{j,\sigma^j_2}}\cdots T^{a_{j,\sigma^j_{n-(k-2)}}}\right),$$
	and $\mathcal{I}_{\overrightarrow{\sigma}}$ is the integrand computed using Appendix \ref{sec: CEGM integrand calculation},
	$$\mathcal{I}_{\overrightarrow{\sigma}} = \mathcal{I}\left(\Omega^{L\setminus\{\ell_1\}}_{\sigma^1}\wedge \cdots \wedge \Omega^{L\setminus\{\ell_{k-1}\}}_{\sigma^{k-1}}\right).$$
\end{defn}

\vspace{.1in}

Let us now formulate in more detail the case $(k,n) = (3,n)$, where the notation can be slightly simplified.  We fix two tuples of generators of $SU(N)$,
$$T^\mathbf{a} = (T^{a_1},\ldots, T^{a_{n}}),\ \ T^\mathbf{b} =  (T^{b_1},\ldots, T^{b_{n}})$$
and from these construct $n^2$ elements $T^{(a_i,b_j)} = T^{a_i} \otimes T^{b_j} \in \mathfrak{su}_N \otimes \mathfrak{su}_N$.

Given, as usual, any $L = (\ell_1,\ell_2)$ and any $(3,n)$ color order $\overrightarrow{\sigma} = ((\sigma^1_{i_1},\ldots, \sigma^1_{i_{n-1}}),(\sigma^2_{j_1},\ldots, \sigma^2_{j_{n-1}}) )$ compatible with $L$, then due to the multiplicativity of traces for the tensor product, we have
\begin{eqnarray}\label{eqn: color factors (3,n)}
	c_{\overrightarrow{\sigma}} & = & \text{tr}\left(T^{a_{\sigma^1_{i_1}}}\cdots T^{a_{\sigma^1_{i_{n-1}}}}\right)\text{tr}\left(T^{b_{\sigma^2_{j_1}}}\cdots T^{b_{\sigma^2_{j_{n-1}}}}\right),
\end{eqnarray}
where we denote by $(i_1,\ldots, i_{n-1})$ and $(j_1,\ldots, j_{n-1}) $ the cyclic orders obtained from $(1,2,\ldots, n)$ by deleting $\ell_1,\ell_2$, respectively.

We observe that due to the double cyclic symmetry of the color factor $c_{\overrightarrow{\sigma}}$, different collections of $n-1$ generators $T^{(a_i,b_j)}$ can give rise to the same color factor $\mathbf{c}_{(\sigma^1,\sigma^2)}$, so long as the cyclic orders $\sigma^1$ and $\sigma^2$ remain the same.  According to Definition \ref{defn: fully symmetric CEGM amplitude}, the permutation invariant color-dressed integrand is written as 
	\begin{eqnarray}
	\mathcal{I}_\mathbf{c} & = & \sum_{\overrightarrow{\sigma} \in \text{CO}^{(3)}_n}c_{\overrightarrow{\sigma}} \mathcal{I}_{\overrightarrow{\sigma}}.
\end{eqnarray}

We claim that with respect to $\mathfrak{su}(N)^{\otimes (2)}$, partial amplitudes of the permutation invariant color-dressed amplitude are exactly CEGM integrals of products $\mathcal{I}_{\overrightarrow{\sigma_L}}\mathcal{I}_{\overrightarrow{\sigma_R}}$ for pairs of (3,n) color orders $\overrightarrow{\sigma_L},\overrightarrow{\sigma_R}$.

\begin{rem}\label{rem: motivation color algebra}
	Let us motivate our choice of the color algebra: why, for the fully permutation invariant color-dressed amplitude, should we use the $(k-1)$-fold tensor product of adjoint representations $\mathfrak{su}(N)^{\otimes (k-1)}$ rather than a single $\mathfrak{su}(N)$?  We explain in two special cases for $(k,n) = (3,5),(3,6)$, why for sake of genericity we need $k-1 = 2$ sets of $n$ generators for $U(N)$.  We may follow the notation in Equation \ref{eqn: color factors (3,n)}, writing 
	$$c_{((i_1i_2i_3i_4),(j_1j_2j_3j_4))} = \text{tr}\left(T^{a_{i_1}}T^{a_{i_2}}T^{a_{i_3}}T^{a_{i_4}}\right)\text{tr}\left(T^{b_{j_1}}T^{b_{j_2}}T^{b_{j_3}}T^{b_{j_4}}\right).$$

	After extracting the function as in Appendix \ref{sec: CEGM integrand calculation}, we find that
	$$\mathcal{I}_{((2345),(1345))} = \frac{p_{124}}{p_{123}p_{134}p_{145}p_{234}p_{245}p_{125}}.$$
	By symmetrizing 
	$$c_{((2345),(1345))}\frac{p_{124}}{p_{123}p_{134}p_{145}p_{234}p_{245}p_{125}}$$
	with the (unsigned) stabilizer group of 
	$$\frac{p_{124}}{p_{123}p_{134}p_{145}p_{234}p_{245}p_{125}},$$
	the numerical coefficient of 
	$$\frac{p_{124}}{p_{123}p_{134}p_{145}p_{234}p_{245}p_{125}}$$
	in the sum is the linear combination of 12 color factors
		\begin{eqnarray}
	& & c_{((1345),(1325))} - c_{((2543),(1325))} - c_{((1325),(1345))} + c_{((2345),(1345))} + c_{((1543),(1523))} - c_{((2345),(1523))}\\
	& - &c_{((1523),(1543))} + c_{((2543),(1543))} - c_{((1345),(2345))}-c_{((1523),(2345))} + c_{((1325),(2543))} - c_{((1543),(2543))} ,\nonumber
\end{eqnarray}
	which is antisymmetric under the exchange $(\sigma^1,\sigma^2) \Leftrightarrow(\sigma^2,\sigma^1)$ and thus would vanish identically if $a_i=b_i$ for all $i=1,\ldots, 5$.

	For sake of contrast, let us consider a possible integrand for $(k,n) = (3,6)$.  Repeating the analogous calculation as above on
	$$c_{((23456),(13456))}\frac{p_{124}p_{125}}{p_{123}p_{134}p_{145}p_{156}p_{234}p_{245}p_{256}p_{126}},$$
	we find a linear combination of four color factors
	$$c_{((23456),(13456))} + c_{((26543),(16543))} + c_{((13456),(23456))} + c_{((16543),(26543))},$$
	which is \textit{symmetric} under the exchange $a_i\leftrightarrow b_i$.
	
\end{rem}

Using $\mathfrak{su}(N)^{\otimes (k-1)}$ to construct color factors appears to contain some surprises, as might expect (and hope!) and we have only just barely initiated the investigation.  For example, by the duality $X(2,n)\simeq X(n-2,n)$ we find a decomposition of the $n$-particle color-dressed biadjoint scalar into partial amplitudes which is much finer than its usual decomposition into partial amplitudes.  We have also encountered certain partial decoupling identities for the permutation invariant color-dressed amplitudes, but they remain quite mysterious, which begs for a deeper exploration.  For $(k,n) = (3,5)$ there are now two distinct isomorphism types of residue posets of partial amplitudes $m^{(3)}(\overrightarrow{\sigma},\overrightarrow{\sigma})$: not only the usual 2-dimensional associahedra, that is pentagons, but also hexagons.  What are the possible residue posets of partial amplitudes $m^{(k)}(\overrightarrow{\sigma},\overrightarrow{\sigma})$ when $n=k+2$ for general $k$?  When the color order consists of $k-1$ cyclic orders that are obtained by deleting $k-2$ labels from a global cyclic order $(1,2,\ldots, k+2)$, we infer that (the compactification of) the tile in $X(k-2,k)$ should be a curvy analog of the $(k-1)$-dimensional standard permutohedron.  What other possible polytopes do we encounter as the color order $\overrightarrow{\sigma}$ varies over all admissible $(k-1)$-tuples of cyclic orders?

We leave these fascinating questions to future work, except to conclude with a related, very concrete question: can one give a general prescription for the coefficients of Generalized Feynman Diagrams for permutation invariant color-dressed amplitudes?  One could also ask questions in the context of color-kinematics duality as discussed in \cite[Section 8]{ABHY2018}.

	\section{Acknowledgements}
	We thank Freddy Cachazo for encouragement to write this paper and for many helpful discussions and comments on the draft.  We also thank Nima Arkani-Hamed, Thomas Lam, Alexander Tumanov and Yong Zhang for explanations about canonical forms, spirited discussions and fruitful collaboration.  We are grateful to Perimeter Institute for Theoretical Physics for excellent working conditions while this paper was written.

\newpage
	\appendix

	\section{Planar Cross-Ratios}\label{sec:planar cross-ratios}
	Here we provide Mathematica code for planar cross-ratios $w_J$ and $w^{(L)}_{i,j}$, respectively.
	\begin{verbatim}
		crossRatioKNProd[n_][J_]:=Module[{k,binary,cube,out},
		k=Length[J];
		binary=Total[Map[UnitVector[n,#]&,J]];
		cube=Map[#+binary&,Map[Total,Subsets[Select[
		Permute[-(UnitVector[n,1]-UnitVector[n,2]),CyclicGroup[n]],
		Union[#+binary]==={0,1}&]]]];
		out=Times@@Map[(p[Flatten[Position[#,1]]])^((-1)^(k-1+Length[
		Intersection[Flatten[Position[#,1]],J]]))&,cube];
		out
		]
	\end{verbatim}
	
	\begin{verbatim}
		crossRatioOctah[{k_,n_}][{L_,{i_,j_}}]:=
		If[Or[MemberQ[Map[Sort,PermutationReplace[Range[k],CyclicGroup[n]]],
		Union[Join[L,{i,j}]]],k=!=Length[Union[Join[L,{i,j}]]],
		Sort[Join[L,{i,j}]]=!=Union[Join[L,{i,j}]]],
		{},
		Module[{range,pos,out,pluckers},
		range=Complement[Range[n],L];
		pos={Position[range,i],Position[range,j]};
		pluckers=Map[p,Table[Sort[Join[L,range[[Flatten[Mod[{pos[[1]],pos[[2]]}+X,
		n-(k-2),1]]]]]],{X,{{0,0},{1,1},{1,0},{0,1}}}]];
		(pluckers[[3]]pluckers[[4]])/(pluckers[[1]]pluckers[[2]])
		]
		]
	\end{verbatim}
	
\section{Integrands from dlog Forms}\label{sec: CEGM integrand calculation}
For sake of completeness we give a standard procedure to produce a CEGM integrand from the dlog expression which is our starting point.  Given a wedge product of the form
$$\Omega_{\overrightarrow{\sigma}} = \bigwedge_{i=1}^{k-1}\Omega_{{\overrightarrow{\sigma}}_i},$$
where each $\Omega_{\sigma_i}$ is the pullback of the canonical form on some connected component of $M_{0,n-(k-2)}$, write 
	$$M(x) = \begin{bmatrix}
	x_{1,1} & x_{1,2} & \cdots  & x_{1,n} \\
	x_{2,1} & x_{2,2} &  & x_{2,n} \\
	\vdots  &  & \ddots & \vdots  \\
	x_{k-1,1} & x_{k-1,2} & \cdots & x_{k-1,n-k-1}\\
	1 & 1 & \cdots & 1
\end{bmatrix}.$$
	It is now easy to recognize from the Jacobian the expression in terms of Plucker variables.
	Let us fix a subset of $k+1$ columns, say $I = \{1,2,\ldots, k+1\}$; we use $I$ as our projective frame and treat the variables in the columns $k+2,\ldots, n$ of $M$ as coordinates.  Now, note that 
$$\Omega_{{\overrightarrow{\sigma}}} = d\log(w_1)\wedge \cdots \wedge d\log(w_{(k-1)(n-k-1)})$$
for some (given) cross-ratios $w_1,\ldots, w_{(k-1)(n-k-1)}$.
	
Fix an order on the coordinates $(z_1,\ldots, z_{(k-1)(n-k-1)})$ such that $\{z_1,\ldots, z_{(k-1)(n-k-1)}\} = \{x_{i,j}: (i,j) \in \lbrack 1,k-1\rbrack \times \lbrack k+2,n\rbrack\}$
	
	We define $\mathcal{I}(\Omega_{{\overrightarrow{\sigma}}})$, or for short just $\mathcal{I}_{\overrightarrow{\sigma}}$, by 
	\begin{eqnarray}\label{eq: integrand from dlog form}
		\mathcal{I}_{{\overrightarrow{\sigma}}} = \frac{\mathbf{J}}{\prod_{j=1}^{k+1}p_{\lbrack 1,k+1\rbrack\setminus \{j\}}},
	\end{eqnarray}
	where $\mathbf{J}$ is the Jacobian determinant
	$$\mathbf{J} = \frac{\partial(\log(w_1),\ldots, \log(w_{(k-1)(n-k-1)}))}{\partial(z_1,\ldots, z_{(k-1)(n-k-1)})}$$
	We claim that $\mathcal{I}_{{\overrightarrow{\sigma}}}$ is independent of the subset $I$.
	
	This allows us to construct integrands (which we recognize as functions of the Plucker variables) from our dlog forms.  In particular, this formula applies to the color-dressed differential forms $\Omega^{(L)}_\mathbf{c}$, and for the integrand we write simply $\mathcal{I}_\mathbf{c}$.

\end{document}